\documentclass[sigconf,edbt]{acmart-edbt2021}

\def\BibTeX{{\rm B\kern-.05em{\sc i\kern-.025em b}\kern-.08em
    T\kern-.1667em\lower.7ex\hbox{E}\kern-.125emX}}

\usepackage{booktabs} % For formal tables

%\documentclass[conference]{IEEEtran}

% To fit to 30 pages.
%\linespread{1.5} %For Single Column Drafts Normal Line Spacing is 1.5
%\linespread{1.5} %For Double Column Drafts Normal Line Spacing is 1.50
\usepackage{amsmath}
% Set bibligography style
% General Packages
\usepackage{setspace}
\usepackage{placeins}
\usepackage{afterpage}
\usepackage[utf8]{inputenc}
\usepackage{graphicx}
\usepackage{amsthm}
\usepackage{color}
\usepackage{caption}
\usepackage{tabu}
\usepackage{array}
\usepackage{booktabs}
\usepackage{threeparttable}
\usepackage{relsize}
\usepackage{physics}

\usepackage{algorithm}
\usepackage[noend]{algpseudocode}

\newlength{\maxwidth}
\newcommand{\algalign}[2]% #1 = text to left, #2 = text to right
{\makebox[\maxwidth][r]{$#1{}$}${}#2$}

\usepackage{float}
\usepackage{stfloats}
\usepackage{lipsum}
\usepackage[english]{babel}

\usepackage{comment}

\usepackage[T1]{fontenc}
\usepackage{mwe}    % loads »blindtext« and »graphicx«
\usepackage{subfig}
% for not numbering in algorithm lines
\let\oldnl\nl% Store \nl in \oldnl
\newcommand{\nonl}{\renewcommand{\nl}{\let\nl\oldnl}}% Remove line number for one line
\newtheorem{thm}{Theorem}

\theoremstyle{definition}
%\newtheorem{defn}[thm]{Definition} % definition numbers are dependent on theorem numbers

 % same for example numbers

%\newtheorem{lemma}{Lemma}

\def\Equal{\texttt{=}}

\pagestyle{empty}

% Copyright
\setcopyright{rightsretained}

% DOI
\acmDOI{}

% ISBN
\acmISBN{978-3-89318-084-4}

%Conference
\acmConference[EDBT 2021]{24th International Conference on Extending Database Technology (EDBT)}{March 23-26, 2021}{Nicosia, Cyprus} 
\acmYear{2021}

\settopmatter{printacmref=false, printccs=false, printfolios=false}

\begin{document}
\title{An Efficient and Secure Location-based Alert Protocol using Searchable Encryption and Huffman Codes}
%\titlenote{Produces the permission block, and copyright information}
%\subtitle{Extended Abstract}
%\subtitlenote{The full version of the author's guide is available as
 % \texttt{acmart.pdf} document}

\author{Sina Shaham}
%\authornote{Dr.~Trovato insisted his name be first.}
%\orcid{1234-5678-9012}
\affiliation{%
  \institution{University of Southern California}
%  \streetaddress{P.O. Box 1212}
%  \city{Dublin} 
  \state{Los Angeles} 
  \country{USA}
%  \postcode{43017-6221}
}
\email{sshaham@usc.edu}

\author{Gabriel Ghinita}
%\authornote{Dr.~Trovato insisted his name be first.}
%\orcid{1234-5678-9012}
\affiliation{%
  \institution{University of Massachusetts}
%  \streetaddress{P.O. Box 1212}
%  \city{Dublin} 
  \state{Boston} 
  \country{USA}
%  \postcode{43017-6221}
}
\email{gghinita@cs.umb.edu}

\author{Cyrus Shahabi}
%\authornote{Dr.~Trovato insisted his name be first.}
%\orcid{1234-5678-9012}
\affiliation{%
  \institution{University of Southern California}
%  \streetaddress{P.O. Box 1212}
%  \city{Dublin} 
  \state{Los Angeles} 
  \country{USA}
%  \postcode{43017-6221}
}
\email{shahabi@usc.edu}

% The default list of authors is too long for headers}
% \renewcommand{\shortauthors}{B. Trovato et al.}
\renewcommand{\shortauthors}{}

\begin{abstract}
Location data are widely used in mobile apps, ranging from location-based recommendations, to social media and navigation. A specific type of interaction is that of {\em location-based alerts}, where mobile users subscribe to a service provider (SP) in order to be notified when a certain event occurs nearby. 
%Often, the event is a function of the mobile user's location. 
Consider, for instance, the ongoing COVID-19 pandemic, where contact tracing has been singled out as an effective means to control the virus spread. Users wish to be notified if they came in proximity to an infected individual. However, serious privacy concerns arise if the users share their location history with the SP in plaintext. 

To address privacy, recent work proposed several protocols that can securely implement location-based alerts. The users upload their encrypted locations to the SP, and the evaluation of location predicates is done directly on ciphertexts. When a certain individual is reported as infected, all matching ciphertexts are found (e.g., according to a predicate such as ``10 feet proximity to any of the locations visited by the infected patient in the last week''), and the corresponding users notified. However, there are significant performance issues associated with existing protocols. The underlying searchable encryption primitives required to perform the matching on ciphertexts are expensive, and without a proper encoding of locations and search predicates, the performance can degrade a lot. In this paper, we propose a novel method for variable-length location encoding based on Huffman codes. By controlling the length required to represent encrypted locations and the corresponding matching predicates, we are able to significantly speed up performance. We provide a theoretical analysis of the gain achieved by using Huffman codes, and we show through extensive experiments that the improvement compared with fixed-length encoding methods is substantial.
\end{abstract}

%
% % The code below should be generated by the tool at
% % http://dl.acm.org/ccs.cfm
% % Please copy and paste the code instead of the example below. 
% %
% \begin{CCSXML}
% <ccs2012>
%  <concept>
%   <concept_id>10010520.10010553.10010562</concept_id>
%   <concept_desc>Computer systems organization~Embedded systems</concept_desc>
%   <concept_significance>500</concept_significance>
%  </concept>
%  <concept>
%   <concept_id>10010520.10010575.10010755</concept_id>
%   <concept_desc>Computer systems organization~Redundancy</concept_desc>
%   <concept_significance>300</concept_significance>
%  </concept>
%  <concept>
%   <concept_id>10010520.10010553.10010554</concept_id>
%   <concept_desc>Computer systems organization~Robotics</concept_desc>
%   <concept_significance>100</concept_significance>
%  </concept>
%  <concept>
%   <concept_id>10003033.10003083.10003095</concept_id>
%   <concept_desc>Networks~Network reliability</concept_desc>
%   <concept_significance>100</concept_significance>
%  </concept>
% </ccs2012>  
% \end{CCSXML}
% 
% \ccsdesc[500]{Computer systems organization~Embedded systems}
% \ccsdesc[300]{Computer systems organization~Redundancy}
% \ccsdesc{Computer systems organization~Robotics}
% \ccsdesc[100]{Networks~Network reliability}

% \keywords{ACM proceedings, \LaTeX, text tagging}

%% A "teaser" image appears between the author and affiliation
%% information and the body of the document, and typically spans the
%% page.
%\begin{teaserfigure}
%  \includegraphics[width=\textwidth]{}
%  \caption{Seattle Mariners at Spring Training, 2010.}
%  \label{fig:teaser}
%\end{teaserfigure}

\maketitle

\section{Introduction}
Location-based alerts are an emerging area of mobile apps that are very relevant to domains such as public safety, healthcare and transportation. For instance, users may want to subscribe to services that notify them whether an imminent danger exists in their close proximity (e.g., an active shooter situation). Or, in the recent context of COVID-19, mobile users wish to be notified if they came in close proximity to an individual who was diagnosed with the disease. While the advantages of location-based alerts are undeniable, they also introduce serious privacy concerns: in order to benefit from such services, users periodically upload their locations to a service provider (SP). The SP monitors large number of individuals, and evaluates spatial predicates to determine which individuals should be alerted. Disclosing individual locations can leak sensitive personal details to the SP, which may in turn share the data with third parties. And even in cases where the SP is fully trusted, it can be subject to cyber-attacks, or subpoenas by governments, resulting in the users' moving history being exposed.

Movement data can disclose sensitive details about an individual's health status, political orientation, alternative lifestyles, etc. Therefore, it is crucial to support location-based alerts while at the same time protecting user privacy. This problem has been recently studied in literature, and formulated in the context of {\em secure alert zones}~\cite{ghinita2014efficient,nguyen2019privacy,shaham2020enhancing}, where users report their {\em encrypted} locations to the SP, and the SP evaluates alert predicates on encrypted data. These approaches require a special kind of encryption that allows predicate evaluation on ciphertexts, namely {\em searchable encryption (SE)} \cite{song2000,boneh2007conjunctive,HXT18}. However, the SE primitives are not specifically designed for geospatial queries, but rather for arbitrary keyword or wildcard queries. As a result, a data mapping step must transform spatial queries to the primitive operations supported on ciphertexts. Due to this translation, the performance overhead can be significant. 

Some solutions use {\em Symmetric Searchable Encryption (SSE)} \cite{song2000,curtmola2011searchable,HXT18}, where a trusted entity knows the secret key of the transformation, and collects the location of all users before encrypting them and sending the ciphertext to the service provider. While the performance of SSE can be quite good, the system model that requires mobile users to share their cleartext locations with a trusted service is not adequate from a privacy perspective, since it still incurs a significant amount of disclosure.

Prior work in secure alert zones~\cite{ghinita2014efficient,nguyen2019privacy,shaham2020enhancing} uses {\em Hidden Vector Encryption (HVE)}~\cite{boneh2007conjunctive}, which is an {\em asymmetric} type of encryption that allows direct evaluation of predicates on top of ciphertexts. Each user encrypts her own location using the {\em public} key of the transformation, and no trusted component that accesses locations in clear is required. However, the performance overhead of HVE in the spatial domain remains high. 

In existing HVE work for geospatial data~\cite{ghinita2014efficient},~\cite{nguyen2019privacy}, the data domain is partitioned into a hierarchical data structure, and each node in this structure is assigned a binary string identifier. The binary representation of each node plays an important part in query encoding, and it influences the amount of computation that needs to be executed when evaluating predicates on ciphertexts. In~\cite{ghinita2014efficient}, the earliest solution for secure alert zones, the impact of the specific encoding is not evaluated in-depth. In~\cite{shaham2020enhancing}, the geospatial data domain is embedded to a high-dimensional hypercube, and then graph embedding~\cite{chandrasekharam1994genetic} is applied to reduce the computation overhead in the predicate evaluation step.

However, all previous solutions use fixed-length encoding of locations and alert zones, meaning that the same number of bits is used to represent each location. In cases where the distribution of alert zones and/or locations is not uniform, using fixed-length encoding can introduce unnecessary overhead. Motivated by this fact, we propose techniques to reduce the computational overhead of HVE by using variable-length encoding.  Specifically, we use Huffman compression codes to represent both user locations and alert predicates. Areas of the domain that are more popular, or more likely to result in a secure alert being triggered, are encoded with fewer bits than less popular areas. This allows us to perform spatial query execution on ciphertexts in a less computationally-intensive manner.

Our specific contributions are:

\begin{itemize}
 
\item We consider for the first time the use of variable-length encoding, specifically Huffman compression codes, for the problem of secure alert zones on encrypted location data;

\item We devise specialized domain encoding techniques for both user locations and alert zones that take into account location popularity;

\item We provide algorithms to evaluate the secure alert zone enclosure predicates directly on ciphertexts when both user locations and alert zones are represented using variable-length encoding;

\item We perform an extensive experimental evaluation which shows that the proposed approach reduces considerably the performance overhead of secure alert zones compared to fixed-length encoding approaches.
    
\end{itemize}

The rest of the paper is organized as follows: Section 2 introduces necessary background and the system model. 
Section 3 provides the details of the proposed variable-length encoding techniques for user locations and alert zones. Section 4 generalizes our solution to non-binary identifiers. Section 5 analyzes the overhead of variable-length encoding on ciphertext size. 
Section 6 provides a security discussion, followed by evaluation of the proposed approach on both real and synthetic datasets in Section 7. We survey related work in Section 8 and conclude in Section 9.

\begin{figure*}[t]
\centering
\includegraphics[scale=.52]{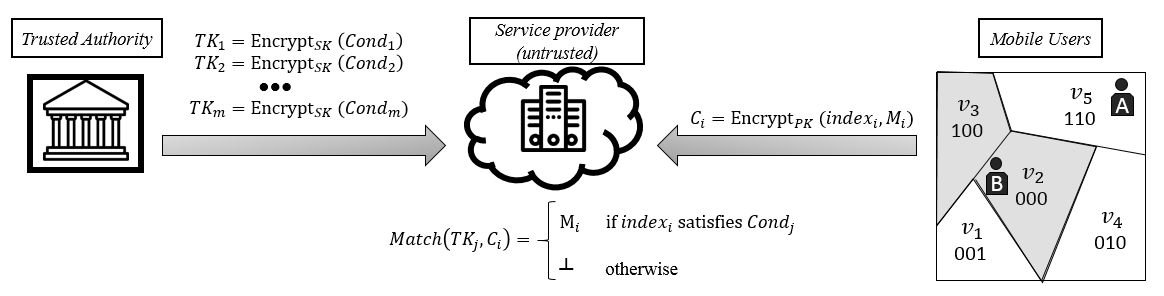}
\hspace{1em}
\centering
\vspace{-18pt}
\caption{Location-based alert system.}
\label{Fig: System model}
\vspace{-15pt}
\end{figure*}

\begin{figure}[t]
	\subfloat[Match]{%
	\includegraphics[scale=.35]{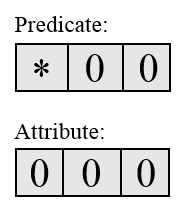}
%	\caption{}
	}
	\qquad 
	\qquad
	\subfloat[Nonmatch]{%
	\includegraphics[scale=.35]{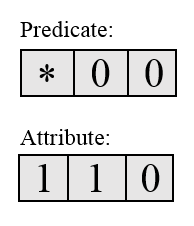}
%	\caption{NonMatch}
	}\vspace{-10pt}
	\caption{HVE evaluation}
	\vspace{-15pt}
	\label{Fig: Matching Process}
\end{figure}

\section{Background}\label{Background} 

Consider a map divided into a set of $n$ non-overlapping partitions 
\begin{equation}
    \mathcal{V}=\{ v_1,\,,v_2,..., v_{n} \}.
\end{equation}
Each partition $v_i$ represents a spatial area on the map referred to as {\em cell}. Cells are identified by a unique binary code called {\em index}, and can have arbitrary shapes and sizes (although equal-size square cells are most likely in practice). We refer to the partitioning as a {\em grid}. The assignment of indexes to cells is referred to as {\em grid encoding}. All indexes must have the same length for security purposes (to prevent an adversary from distinguishing cells based on length). Fig.~\ref{Fig: System model} shows a sample grid with five cells, each associated with an index of length three. 

When an event of interest occurs, an {\em alert zone} is created, which spans a number of grid cells. We refer to such cells interchangeably as {\em alert cells} or {\em alerted cells}. In Fig.~\ref{Fig: System model}, cells $v_3$ and $v_2$ associated with the indexes 100 and 000 (shown highlighted) are alert cells. We denote the likelihood of cell $v_i$ being alerted by $p(v_i)$, or alternatively $p_i$. Our goal is to exploit alert cell likelihoods in order to choose an encoding that reduces the computational complexity of HVE. 

\subsection{Hidden Vector Encryption}
\label{sec:app}

{\em Hidden Vector Encryption (HVE)} \cite{boneh2007conjunctive} is a searchable encryption system that supports predicates in the form of conjunctive equality, range and subset queries. Search on ciphertexts can be performed with respect to a number of {\em index attributes}. HVE represents an attribute as a bit vector (each element has value $0$ or $1$), and the search predicate as a {\em pattern} vector where each element can be $0$, $1$ or '*' that signifies a wildcard (or ``don't care'') value. Let $l$ denote the HVE {\em width}, which is the bit length of the attribute, and consequently that of the search predicate. A predicate evaluates to $True$ for a ciphertext $C$ if the attribute vector $I$ used to encrypt $C$ has the same values as the pattern vector of the predicate in all positions that are not '*' in the latter. Fig.~\ref{Fig: Matching Process} illustrates the two cases of {\em Match} and {\em Non-Match} for HVE.

HVE is built on top of a symmetrical bilinear map of composite order \cite{boneh2007conjunctive}, which is a function $e : \mathbb{G} \times \mathbb{G} \rightarrow \mathbb{G}_T$ such that $\forall a,b \in G$ and $ \forall u,v \in \mathbb{Z}$ it holds that $e(a^u,b^v)=e(a,b)^{uv}$. $\mathbb{G}$ and $\mathbb{G}_T$ are cyclic multiplicative groups of composite order $N=P\cdot Q$ where $P$ and $Q$ are large primes of equal bit length. We denote by $\mathbb{G}_p$, $\mathbb{G}_q$ the subgroups of $\mathbb{G}$ of orders $P$ and $Q$, respectively. Let $l$ denote the HVE {\em width}, which is the bit length of the attribute, and consequently that of the search predicate. HVE consists of the following phases:

{\bf Setup.} The public/private ($PK$/$SK$) key pair has the form:
$$SK = ( g_q \in \mathbb{G}_q,\quad a \in \mathbb{Z}_p,\quad \forall i \in [1..l]: u_i,h_i, w_i, g, v \in \mathbb{G}_p )$$
To generate $PK$, we first choose at random elements \(R_{u,i}, R_{h,i}\), \(R_{w,i} 
\in \mathbb{G}_q, \forall i \in [1..l]\)  and \(R_v \in \mathbb{G}_q\). Next, $PK$ is determined as:
$$PK = (g_q,\quad V=vR_v,\quad A=e(g,v)^a,\quad$$
$$\forall i \in [1..l]: U_i=u_iR_{u,i},\quad  H_i=h_iR_{h,i},\quad  W_i=w_iR_{w,i})$$

{\bf Encryption} uses $PK$ and takes as parameters index attribute $I$ and message $M \in \mathbb{G}_T$. The following random elements are generated: \(Z, Z_{i,1}, Z_{i,2} \in \mathbb{G}_q\) and \(s \in \mathbb{Z}_n\). Then, the ciphertext is: 
$$C = (C^{'}= MA^s,\quad C_0=V^sZ, \quad $$
$$\forall i \in [1..l]: C_{i,1} = (U^{I_i}_iH_i)^sZ_{i,1}, \quad C_{i,2} = W^{s}_iZ_{i,2} )$$

{\bf Token Generation.} Using $SK$, and given a search predicate encoded as pattern vector \(I_{*}\), 
a search token $TK$ is generated as follows: let \(J\) be the set of all indexes $i$ where \(I_{*}[i] \neq *\).
We randomly generate \(r_{i,1}\) and \(r_{i,2} \in \mathbb{Z}_p, \forall i \in J\). 
Then
$$TK=(I_*, K_0 = g^a\prod_{i \in J}(u^{I_{*}[i]}_ih_i)^{r_{i,1}}w^{r_{i,2}}_i, \quad$$
$$ \forall i \in [1..l]: K_{i,1} = v^{r_i,1},\quad K_{i,2} = v^{r_i,2})$$

{\bf Query} is executed at the service provider, and evaluates if the predicate represented by $TK$ holds for ciphertext $C$. The server attempts to determine the value of \(M\) as 
\begin{equation}
M = C^{'}{/} (e(C_0,K_0) {/} \prod_{i \in J} e(C_{i,1},K_{i,1}) e(C_{i,2},K_{i,2}) \label{eq:query}
\end{equation}
If the index $I$ based on which $C$ was computed satisfies $TK$, then the actual value of \(M\) is returned, otherwise a special number which is not in the valid message domain (denoted by $\bot$) is obtained.

The HVE query, or matching, is the most important operation in a location-based alert system, because it is executed every time an alert occurs, and it requires processing of a large number of ciphertexts. {\em Our goal is to reduce the overhead of matching, and the most direct way to do so is by reducing the number of non-star bits in a token, since the number of expensive bilinear maps is proportional to the count of non-star bits}.

\subsection{System Model}\label{System Model}

The architecture of location-based alert systems is shown in Fig.~\ref{Fig: System model}. There are three types of entities: mobile users, a service provider (SP) and a trusted authority (TA). 

 Mobile users  subscribe to the location-based alert system and periodically submit their encrypted location updates. Users want to be notified when they are in an alert cell, without their privacy being compromised. They the public key (PK) of the HVE cryptosystem to encrypt their locations before sending them to the SP. For example, users A and B on the grid encrypt their indexes 110 and 000, generating two ciphertexts $C_A$ and $C_B$, respectively.

The \textbf{Trusted Authority (TA)} has the secret key (SK) of the HVE cryptosystem. In practice, the TA role could be played by a reputable organization such as a law enforcement agency, or the center for disease control, who issue {\em HVE search tokens} corresponding to alerts. The TA does {\em not} have access to user locations, and is assumed not to collude with the SP. The TA is acting in the interest of the general public, but does not have the infrastructure to run a complex alert system, which is why this service is outsourced to the SP.
One important aspect when generating tokens is to minimize the number of non-star bits in a token, in order to reduce the computational overhead of matching. A common approach is to use binary minimization on the cell identifiers. For example, the two alerted indexes 100 and 000 are combined using binary expression minimization to obtain *00, then, the new index is encrypted using the SK to create a token with two non-star bits, instead of two tokens with three non-star bits each. The overhead is reduced from six sets of bilinear pairings to two.

The SP implements the alert service. It receives encrypted updates from users and tokens from the TA, and performs the matching to decide whether encrypted location $C_i$ of user $i$ falls within alert zone $j$ represented by token $TK_j$. If the {\em matching} outcome is positive, the SP learns that the user is inside the alert zone, and notifies the user. For a matching process to result in a positive outcome, all the token's non-star bits should exactly match the user index. Star bits ('don't care' bits), as the name suggests, match with either a zero or one bit in the user index. Note that all received information from users and the TA is encrypted in the matching process, and the search happens over encrypted data only. 

Revisiting the example in Fig.~\ref{Fig: System model}, the outcome of matching between token *00 and user B's ciphertext corresponding to index 000 is positive (all the non-star bits match); however, the matching outcome between *00 and 110 (user A) is negative as the second bits do not match. From the mathematical derivation of HVE (\ref{sec:app}), the HVE system's computation complexity is proportional to the number of non-star bits in the tokens. Therefore, a good grid encoding reduces the overall number of non-star bits in tokens to minimize the HVE computational overhead. 

\newcommand{\rvec}{\mathrm {\mathbf {r}}} 
\begingroup
\begin{table}
\caption {Summary of notations.} 
\vspace{-10pt}
\centering
\begin{tabular}{>{\arraybackslash}m{2cm} >{\arraybackslash}m{5.8cm} }
\hline\hline
 Symbol  & Description\\    \hline
  $n$ & Number of cells \\
  $\mathcal{V} = \{ \bigcup v_i\}$ & Set of all cells\\
  $p(v_i)$ & {Probability of cell $v_i$ becoming alerted}\\
  $C_j$ & Encrypted location of user $j$\\
  $TK_j$ & Token $j$\\
  $M_j$ & Message of user $j$\\
  RL    & Depth of prefix tree (reference length)\\
  $r_i$ & $i$th internal node of tree\\
% \hline $p(.)$ & This function returns the probability of cell or set of cells being alerted. \\
    Pois($\lambda$)& Poisson distribution; occurrence rate $\lambda$\\
     $\Sigma$ & Identifier symbol alphabet\\
    $\gamma$ &    {\em Euler-Mascheroni} constant\\
    $\phi$  &  Golden ratio  \\
    $a[i:j]$ & Returns elements $i$ to $j-1$ of array $a$\\ 
    $\overline{x_1x_2...x_l}$ & Concatenation of symbols $x_1$ to $x_l$\\
\hline\hline
\end{tabular}
\label{tab:table1}
\vspace{-15pt}
\end{table}
\endgroup

\subsection{Motivation and Scope}

While prior work made important steps toward secure and scalable location-based alert systems, important performance issues still need to be addressed. The pioneering work in~\cite{ghinita2014efficient} was the first to use searchable HVE encryption in the context of locations, but assumed that all data domain regions are equally likely to be part of an alert zone. Later in~\cite{shaham2020enhancing}, it was shown that if there are significant differences in likelihood of distinct regions to be part of an alert zone, then performance can be significantly boosted. However, both~\cite{ghinita2014efficient} and~\cite{shaham2020enhancing} use fixed-length encoding, i.e., the same number of bits are used to represent each cell. Hence, their performance overhead depends entirely on their ability to aggregate search tokens. When alert zones consist of a relatively large number of co-located alert cells, fixed-length encoding methods are able to perform effectively binary minimization of identifiers, and reduce overhead. This may be sufficient in some scenarios such as an active shooter, or a gas leak, where there is an epicenter of the event, and a range around the epicenter (often circular) within which users must be alerted. The range can be large, for instance in the order of hundreds of meters.

However, in other applications, alert zones may be compact and sparse. For instance, consider the case of contact tracing -- an important task in controlling pandemics, such as COVID-19. In this case, there will be a number of distinct alert zones, corresponding to the set of locations visited by a COVID-19 patient. For each individual site, the range of the query is relatively small, for instance, several meters around the patient location for direct spread. Or, in the case of surface spread or aerosol transmission, the query may be restricted to a room, or a store, which may be in the order of $10-20$ meters in size. There are insufficient cells in the alert zones to allow for effective token aggregation with fixed encoding, and the performance obtained may be poor. 

Our goal is to address this latter case, and we do so by using a novel variable-length encoding approach. In this case, it is important to use fewer representation bits for high-probability regions. While our advantage is greatest for small, sparse alert zones, we show in our empirical evaluation in Section~\ref{Experimental Evaluation} that variable-length encoding can outperform fixed-length approaches for a wide choice of alert zone sizes, and mixed-size workloads.

Normalizing the cell probability values over the domain space reveals how likely a cell is to be alerted compared to others. A typical stochastic distribution used to model sporadic events is {\em Poisson distribution}, characterized as follows. 

\begin{thm}\label{Theorem: Poisson distribution}
If a random variable $Y$ represents the number of alert cells on the grid, then, it approximately follows Poisson distribution (Pois($\lambda$)) with the occurrence rate of one ($\lambda = 1$).  
\end{thm}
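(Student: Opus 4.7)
The plan is to model the grid as a superposition of independent Bernoulli trials and invoke a Poisson approximation result (Le Cam's theorem) with the normalization constant chosen so that $\lambda = 1$. First, I would treat the alert status of each cell as an independent random variable $X_i \sim \mathrm{Bernoulli}(p_i)$, so that the total number of alerted cells is $Y = \sum_{i=1}^{n} X_i$. Under this model, $\mathbb{E}[Y] = \sum_{i=1}^{n} p_i$, and because the cell probabilities are normalized over the domain (so that the $p_i$ form a probability distribution over $\mathcal{V}$), we immediately get $\mathbb{E}[Y] = 1$, pinning the occurrence rate at $\lambda = 1$.

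Second, I would invoke Le Cam's theorem for Poisson approximation of sums of independent Bernoulli variables, which gives
\begin{equation*}
    d_{TV}\!\left(Y,\ \mathrm{Pois}(\lambda)\right) \ \leq\ \sum_{i=1}^{n} p_i^2,
\end{equation*}
where $d_{TV}$ denotes total variation distance. Since in the regimes of interest the grid is fine enough that individual cell probabilities are small (e.g., $p_i = O(1/n)$), we have $\sum p_i^2 \leq \max_i p_i \cdot \sum_i p_i = \max_i p_i$, which tends to zero as $n$ grows. Combined with $\lambda = 1$ from the normalization step, this yields the approximate statement $Y \stackrel{\cdot}{\sim} \mathrm{Pois}(1)$.

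The step I expect to be the main obstacle is justifying \emph{independence} across cells, since in practical alert scenarios (e.g., contact tracing, gas leaks) alerted cells tend to cluster spatially, so the $X_i$ are genuinely correlated. I would address this either by (i) conditioning on one epicenter per event and treating different events as independent thinned arrivals, or (ii) replacing Le Cam's theorem with the Stein--Chen method, which still yields a Poisson approximation for locally dependent indicators provided the dependence neighborhoods are small relative to $n$. A secondary subtlety is making the meaning of ``approximately'' precise: I would state it as a bound on $d_{TV}$ in the non-asymptotic version, or as convergence in distribution as $n \to \infty$ with $\sum p_i \to 1$ and $\max_i p_i \to 0$ in the asymptotic version, leaving the author free to pick whichever matches the experimental setup in Section~\ref{Experimental Evaluation}.
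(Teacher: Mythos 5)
Your proposal follows essentially the same route as the paper: both write $Y$ as a sum of per-cell alert indicators, use the normalization $\sum_{i=1}^{n} p(v_i) = 1$ to fix $\lambda = 1$, and appeal to the Poisson law of rare events for a large number of small-probability, independent or weakly dependent indicators. The paper's proof stops at that qualitative appeal, whereas you make it quantitative via Le Cam's total-variation bound and explicitly flag the dependence issue (offering Stein--Chen as a remedy), so your version is a more careful rendering of the same argument rather than a different one.
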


\begin{proof}
An alert zone event on the map is a subset of cells $v_1,\,,v_2,..., v_{n}$, where $n$ is a large value and each probability $p(v_i)$ is relatively small. Moreover, the events are either independent or weakly dependent of each other. Let 
\begin{equation}
    Y = \sum_{i=1}^n I(v_i)
\end{equation}
count how many of the cells are alerted, in which $I$ is an indicator random variable having a value of one when the cell is alerted and zero otherwise. Based on the Poisson distribution, the random variable $Y$ can be approximated with rate $\lambda = \sum_{i=1}^n p(v_i) = 1$. Therefore, the probability of having $k$ alert cells is given by
\begin{equation}
    p(Y = k)  = \dfrac{e^{-1}}{k!}.
\end{equation}
\end{proof}

%Poisson distribution is shown in Fig.~\ref{Poisson Distribution}. 

One can see from the Poisson distribution that the likelihood of having a large number of alert cells is low. The maximum probability corresponds to having only a single alert cell in a zone, and then it drops significantly. This motivates our technique for dealing effectively with cases where alert zones are compact. 

%Previous approaches in the literature can achieve a significant reduction in the computational complexity of the HVE when the number of alert cells are large; however, as revealed by Poisson distribution, in most application there exists only a limited number of alert cells which undermines the practicality previous approaches in such scenarios. Motivated by this fact, we aim at proposing a framework in which the variable-length codes can be incorporated and used for grid encoding. The approach can be implemented based on any uniquely decodable code; however, the focus would be on Huffman coding as it results in optimal average code lengths. 

%%%%%%%%%%%%%%%%%%%%%%%%%%%%%%%%%%%%%%%%%%%%%%%%%%%%%%%%%%

%%%%%%%%%%%%%%%%%%%%%%%%%%%%%%%%%%%%%%%%%%%%%%%%%%%%%%%%%%%%

\section{Location-based Alerts with Variable-length Encoding}

%In this section, we introduce the proposed system for location-based alerts with variable-length encoding. 
In Section~\ref{sec:huff} we provide an overview of Huffman codes; Section~\ref{sec:encoding} presents the proposed location encoding scheme; Section~\ref{sec:minimize} introduces the token minimization process.

\subsection{Prefix and Huffman Codes}\label{sec:huff}

Generally, any uniquely-decodable representation used to transmit information is a {\em prefix code}, i.e., it follows the {\em prefix property}, which requires that no whole code can be part of any other code. For example, $[000,001,01,10,11]$ is a prefix code as no code starts with any other code in the set. A well-known theorem based on {\em Kraft inequality}~\cite{cover1999elements} states that any prefix over an alphabet of size two with string lengths of $l_1$ to $l_n$ must satisfy the inequality
\begin{equation}
\sum_{i=1}^{n}\dfrac{1}{2^{l_i}} - 1 \leq 0,
\end{equation}
and conversely, given a set of string lengths that satisfies the Kraft inequality, there exists a prefix code with these string lengths. Let the tuple $\mathcal{P} = (p_1,p_2,...,p_n)$ defined over space partitioning $\mathcal{V}$ indicate the likelihood of cells $v_1, \cdots v_n$ becoming alert cells. Furthermore, suppose that the function $f(l_1,l_2,...,l_n)$ returns the average symbol length with no minimization, and $f_M(l_1,l_2,...,l_n)$ returns the average reduction in number of bits in the minimization process. Given the tuple of cells and probabilities, the objective of a minimal encoding is to generate a prefix code $\mathcal{C}(\mathcal{P}) = (c_1,c_2,...,c_n)$ as follows:
\begin{equation*}
\begin{aligned}
%& \underset{X}{\text{minimize}}
& \text{minimize}
& & L(\mathcal{C}(\mathcal{P}))  = \sum_{i=1}^{n}\, p(v_i)\times length(c_i)\\
& \text{subject to}
%& & \sum_{j=0}^{i} n_j(j) = S, \; (i,j) \in \Omega, \\
& & L(\mathcal{C}(\mathcal{P}))\leq L(\mathcal{T}(\mathcal{P}))\; \text{for any code }\; \mathcal{T}(\mathcal{P})\\
\end{aligned}
\end{equation*}

%In the above optimization, constraints ensure that each feasible set of $(l_1,l_2,...,l_n)$ indeed corresponds to a prefix code. 
Note that $f_M$, which indicates the amount of minimization, is not necessarily a function. For example, a previously used minimization approach based on Karnaugh maps~\cite{ghinita2014efficient} does not always result in a unique output. The NP-hardness of the above problem based on fixed-length codes is shown in~\cite{shaham2020enhancing}. 

%To the best of our knowledge, only fixed-length codes have been considered so far in HVE encoding, due to the requirement that grid indexes and tokens generator codes with the same size. We show how exploiting the wildcard character, variable-length codes can be incorporated in HVE systems. Doing so, prefix codes such as Huffman codes can be used to achieve significant improvements, particularly when the number of alert cells is low.   

The most well-known prefix code is the {\em Huffman encoding}, widely used in communication systems as it results in optimal decodable average code length. The main idea behind Huffman codes is that more common symbols are represented with fewer bits compared with the less common symbols. In grid encoding,  it is desirable to encode symbols that have higher probabilities of being in alert zones with fewer bits than the less likely ones. Given the tuple of cells and probabilities, the objective of Huffman encoding is to generate a prefix code that minimizes the average length of codewords:

\begin{equation}
\begin{aligned}\label{eq: optimization}
%& \underset{X}{\text{minimize}}
& \text{minimize}
& & f(l_1,l_2,...,l_n) - f_M(l_1,l_2,...,l_n)\\
& \text{subject to}
%& & \sum_{j=0}^{i} n_j(j) = S, \; (i,j) \in \Omega, \\
& & \sum_{i=1}^{n}\dfrac{1}{2^{l_i}} - 1 \leq 0\\
& & & l_i > 0,\;\; \forall i=1,..,n
\end{aligned}
\end{equation}

{\bf Prefix Trees.} An intuitive way to discover whether the prefix property holds for a code is to draw its associated binary tree, called prefix tree. The prefix tree is constructed by assigning an empty character to the root and descending through the tree. At each branching point, we either choose to go left by adding a zero character or move to the right child by adding a character '1' to the root string. We call the tree's depth {\em reference length (RL)}. This number also indicates the maximum length of a prefix code. Moreover, the subtree roots are referred to as {\em interior nodes} of the prefix tree, and the leaf nodes are the {\em prefix codes}. Fig.~\ref{fig:F12}, shows a prefix tree with an RL of three. As an example, the prefix code '001' is generated by traversing nodes $r_4$, $r_2$, and $r_1$.

\subsection{Proposed Coding Scheme} \label{sec:encoding}

%Can any code work? why not

The focus of prior work on secure alert zones~\cite{ghinita2014efficient,shaham2020enhancing} has been on fixed-length codes. Such codes are indeed a special case of prefix codes, in which the tree is balanced, and no assigned code can start with another. Next, we show how variable-length codes can be used in conjunction with HVE. An overview of the proposed approach is presented in Fig.~\ref{Fig: Overview}. Based on a given prefix code, the TA generates grid indexes where each index is a unique identifier of a cell in the grid. In addition to grid indexes, a coding tree is generated for the purpose of token minimization. Given the set of indexes associated with the alert cells, the TA applies the proposed minimization algorithm and transmits the encrypted tokens to the SP. Fig.~\ref{fig: algorithm process} serves as a running example.

\begin{figure}[t]
\centering
\includegraphics[scale=.5]{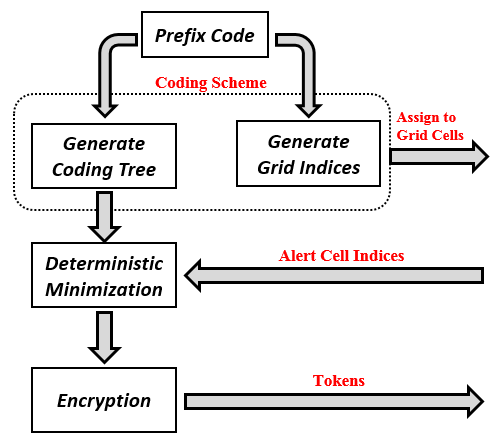}
\hspace{1em}
\centering
\caption{Overview of HVE with variable-length codes.}
\label{Fig: Overview}
\vspace{-15pt}
\end{figure}

Our approach consists of four steps:

{\em I. Generation of Probabilities:} Our coding scheme relies on a set of probabilities for each cell of the location domain to be part of an alert zone. This step is a prerequisite to our approach, and thus performed independently of the encoding. Such probabilities are application dependent, and can be generated based on a trained machine learning model. In the example of Fig.~\ref{fig:F11}, we have five cells $\mathcal{V}=( v_1,\,v_2,\, v_3\, v_4,\, v_{5} )$ with alert probabilities of $$\mathcal{P}\Equal( p(v_1)\Equal 0.1,\,p(v_2)\Equal 0.2,\,p(v_3)\Equal 0.5,\,p(v_4)\Equal 0.4,\, p(v_{5})\Equal 0.6 ).$$

For grids entailing a high correlation between alert probabilities of cells, the setting in~\cite{shaham2020enhancing} or deep learning models such as~\cite{almanie2015crime} can be used to find the stationary distribution of probabilities, leading to a more accurate probabilistic model.

{\em II. Prefix tree :} An arbitrary prefix code defined over alphabet $\Sigma = \{0,1\}$ can be represented by a binary tree with the prefix codes located on the leaves of the tree. We are not just interested in the generated prefix codes, but also in the codes associated with the internal nodes of the tree. Therefore, internal nodes are also stored as well as the generated prefix codes. 

The topology of the tree is stored by recording five attributes of each node: left child (leftChild), right child (rightChild), parent node (parentNode), weight, and the associated code. The weight of a node represents its frequency. The leaf nodes have a frequency equal to their probability, and the weight of a parent node is found by the addition of its immediate children's weights (i.e., {\em Huffman mechanism}). The prefix tree is not used directly in the prefix coding scheme, but two sets of codes are generated based on the prefix tree; one is used for identifying grid cells referred to as cell {\em indexes}, and another is used by the TA to perform token minimization. 
%As suggested previously, the number of non-star bits in the tokens is the defining factor for the computational complexity of the HVE. Therefore, star bits' addition does put a significant computation complexity on the HVE process at the cloud. 
Once the base codes are assigned for each node of the tree, two sets of {\em padding} are conducted, one for indexes assigned to the cells, and one used as a guideline for the token generation. The padding leads to a length of RL (i.e., equal length) for all codewords and indexes. Recall that equal ciphertext lengths is a requirement for security. However, the variable-length codes affect the ciphetexts and token contents in a way that allows fast processing. Furthermore, the padding prevents an adversary from distinguishing among ciphertexts.\\ 

\begin{algorithm}[t]
%\SetKwInOut{Input}{Input}\SetKwInOut{Output}{output}
\caption{Coding Scheme}\label{Algo: Coding Scheme}
\begin{flushleft}
    \hspace*{\algorithmicindent} \textbf{Input}:\;\;  Root; $\mathcal{V}$;
\end{flushleft}
%\hspace*{\algorithmicindent} \textbf{Input}:\;\;  Root; $\mathcal{V}$; $\mathcal{P}$\\% \hspace*{\algorithmicindent} \textbf{Output}: \;
\begin{algorithmic}[1]
%\State //Tree is a list of nodes, each with five attributes: leftChild, rightChild, parent, weight and code, all nodes of tree are retrieved by traversing the tree
\State //Root traversal to generate codes 
\Function{Traverse}{{\em Root}} 
    \If {{\em Root} has no children}
    \State \textbf{return} True
    \Else
    \State {\em Root}.leftChild = {\em Root}.code + '0'
    \State {\em Root}.rightChild = {\em Root}.code + '1'
    \State Traverse(leftChild)
    \State Traverse(rightChild)
    \EndIf
\EndFunction\\

\State $Traverse\,(Root)$
\State //Generate indexes assigned to cells
\State  $RL \leftarrow \text{depth of tree}$
\For {all leaf nodes} %\Comment{Create a node for each cell}
    \State $index = node$.code
    \While{$len(node.\text{code})<RL$}
        \State $index = index + '0'$
    \EndWhile%\label{euclidendwhile}
    \State Assign index to $v_i$ that has $p(v_i)= node.\text{weight}$
\EndFor\\

\State //Generate coding tree
\For {all nodes} %\Comment{Create a node for each cell}
    \While{$len(node.$code$)<RL$}
    \State $node$.code = $node$.code + '*'
    \EndWhile%\label{euclidendwhile}
\EndFor
\State //{\em codingTree} is the set of all nodes, alternatively {\em Root} can be returned
\State \textbf{return} {\em codingTree}
%\EndProcedure
\end{algorithmic}
\end{algorithm}

III. {\em Grid indexes:} the prefix codes (leaves on the prefix tree) are padded from the right-hand side with zeros if they have a length less than RL. In our example, the generated prefix codes are $\{v_1:001, v_2:000, v_3:10, v_4:01, v_5:11\}$ which are transformed to  $\{v_1:001, v_2:000, v_3:100, v_4:010, v_5:110\}$ after padding with zeros. We refer to zero-padded prefix codes as {\em indexes}. Once codes are created, they are assigned to corresponding cells identified by their probabilities. The assigned indexes to the sample grid are shown in Fig~\ref{fig:F14}. These are the indexes utilized by users to identify the cell they are enclosed by.\\

IV. {\em Coding tree:} the coding tree is used by the trusted authority to generate tokens. The coding tree is constructed by adding star bits on the right side of the prefix codes as well as the internal nodes on the prefix tree if they have a length less than RL. The padding for the sample grid is shown in Fig.~\ref{fig:F13}. The codes on the coding tree are referred to as {\em codewords}.\\

Algorithm~\ref{Algo: Coding Scheme} formally presents how indexes and the coding tree are generated for a given prefix tree. The inputs to the algorithm are the tree root, grid cells, and their probabilities. The tree root is sufficient for reconstructing the tree as children and parents are presumed to be recorded. The algorithm traverses through nodes to generate the prefix tree. Next, indexes of the grid are generated and assigned to the grid cells, and finally, the coding tree is completed and returned as the output of the algorithm.
%The TA will use the coding tree to conduct deterministic minimization.  

%Once the grid indexes are assigned to the cells and the coding tree is acquired by the TA. The users of the mobile application use the index of the cell where they are located to query about a possible danger around them, and the encryption is conducted as per usual for HVE process explained in--. On the other side, the TA is aware of several indexes associated with the alert cells, and would apply the minimization process and generated tokens. However, the minimization process is no longer conducted based on Karnaugh maps, and a deterministic approach, explained in Section--,  will be used to generate the tokens. Before explaining the deterministic minimization, we incorporate Huffman codes in the framework as well as two other baseline prefix codes.

Algorithm~\ref{Algo: Huffman Tree} presents how the Huffman tree is generated. The algorithm starts by creating a node (leaf node) for each cell of the grid, sorting them in ascending order based on their weights, and placing them in a priority queue. Recall that the weights of the leaf nodes are the probability of cells becoming alerted. Next, while the length of the queue is greater than one, the algorithm extracts two nodes with the minimum weights and creates a new internal node (newNode) with a weight equal to the addition of two extracted nodes. The new node is assigned as the parent of extracted nodes, and the extracted nodes are assigned as left and right children of the parent node. The new node's weight is inserted in the queue, and the process continues until only a single weight remains in the queue. The last node is the root of the tree and the output of the algorithm. The root node is used as input to Algorithm~\ref{Algo: Coding Scheme} to generate the coding tree and grid indexes. The algorithm is executed with the time complexity of $\mathcal{O}(n(\log_2n))$.

\begin{algorithm}[t]
%\SetKwInOut{Input}{Input}\SetKwInOut{Output}{output}
\caption{Huffman Tree}\label{Algo: Huffman Tree}
\begin{flushleft}
 \hspace*{\algorithmicindent} \textbf{Input}:\;\;   $\mathcal{V}$; $\mathcal{P}$
\end{flushleft}
% \hspace*{\algorithmicindent} \textbf{Output}: \;
\begin{algorithmic}[1]
\State //Generate tree nodes
\For {$v_i\in\mathcal{V}$} %\Comment{Create a node for each cell}
	\State Create a newNode(leftChild=None, rightChild = None, 
	\State \qquad \qquad  \qquad parent = None, weight = $p(v_i)$, code = ' ')

\EndFor

\State Insert nodes into priority queue $Q$
\While{len($Q$)>1}
    \State Sort $Q$ in ascending order of weights
    \State  $(node_1, node_2)\leftarrow$   Extract first two nodes in $Q$
	\State Create a newNode(leftChild= $node_1$, rightChild = $node_2$, 
	\State \qquad \qquad   parent = None, 
	\State \qquad \qquad   weight = $n_1.weight+ n_2.weight$, code = ' ')
    %\State Set newNode as parent of $n_1$ and $n_2$
    \State $n_1$.parent, $n_2$.parent= newNode
    \State Insert newNode into $Q$
\EndWhile\label{euclidendwhile}
\State //The last nodes in $Q$ is the tree root
\State \textbf{return} root 
%\EndProcedure
\end{algorithmic}
\end{algorithm}

\begin{figure}[t]
\centering
	\subfloat[Sample grid.\label{fig:F11}]{%
	\includegraphics[scale = 0.5]{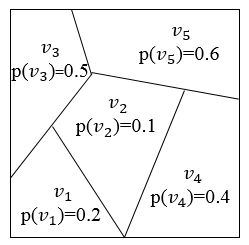}
	}	
	\hfill
	\subfloat[Coding tree generated based on Huffman encoding.\label{fig:F12}]{%
		\includegraphics[scale = 0.5]{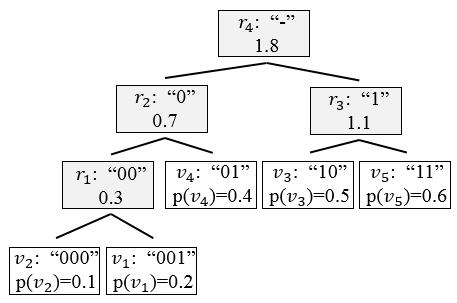}	
	}
	\hfill
	\subfloat[Assigned grid indexes.\label{fig:F14}]{%
		\includegraphics[scale=.5]{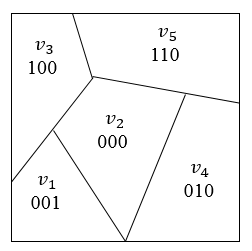}
	}
	\hfill
	\subfloat[Coding tree.\label{fig:F13}]{%
		\includegraphics[scale=.5]{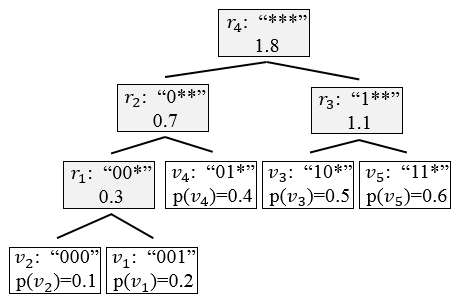}
	}
	\vspace{-5pt}
	\caption{Sample variable-length coding scheme}
	\label{fig: algorithm process}
	\vspace{-15pt}
\end{figure}

The following steps illustrate the generation of Huffman tree for the example presented in Fig.~\ref{fig: algorithm process}.

1. One node is generated for each cell $( v_1,\,v_2,\, v_3,\, v_4,\, v_{5} )$ and their probabilities are inserted in a priority queue $$Q\Equal( p(v_1)\Equal 0.2,\,p(v_2)\Equal 0.1,\,p(v_3)\Equal 0.5,\,p(v_4)\Equal 0.4,\, p(v_{5})\Equal 0.6 ).$$

2. The queue is sorted in an ascending order:
$$Q\Equal( p(v_2)\Equal 0.1,\,p(v_1)\Equal 0.2,\,p(v_4)\Equal 0.4,p(v_3)\Equal 0.5,\, p(v_{5})\Equal 0.6 )$$
    
3. The two nodes with the minimum weights ($v_1$ and $v_2$) are extracted from the queue and a new parent node $r_1$ is generated with the weight of $p(v_2)+p(v_1)=0.3$ and inserted into the queue:
$$Q\Equal( p(r_1)\Equal 0.3,\,p(v_4)\Equal 0.4,\,p(v_3)\Equal 0.5,\, p(v_{5})\Equal 0.6 )$$

4. Similarly $r_2$, $r_3$, and $r_4$ are generated as 
\begin{align}
&Q\Equal( p(r_2)\Equal 0.7,\,p(v_3)\Equal 0.5,\, p(v_{5})\Equal 0.6 ),\nonumber\\
%&Q\Equal( p(r_2)\Equal 0.7,\,p(v_3)\Equal 0.5,\, p(v_{5})\Equal 0.6 ),\nonumber\\
&Q\Equal( p(r_2)\Equal 0.7,\,p(r_3)\Equal 1.1),\nonumber \\
&Q\Equal( p(r_4)\Equal 1.8). \nonumber
\end{align}

%5. The base codes are assigned by traversing from the root node as shown in Fig.~\ref{fig:F12}.\\ 

Another prefix tree evaluated in the experiments is called balanced tree. This prefix tree is used as a baseline to understand the improvement made by the Huffman tree. The balanced tree is a complete binary tree constructed in $log_2(n)$ steps. Given a tuple of probabilities corresponding to grid cells, they are sorted in ascending order and placed in a priority queue, i.e., $Q$. In the $j$th step, nodes $Q[2i]$ and $Q[2i+1]$ are paired for $i=0,1,..., \left \lfloor{n/2^j}\right\rfloor-1$, and each pair is replaced with a parent node in the queue. The weight of a parent is the addition of its immediate children's weights. The final remaining node in the queue is the tree's root.

\subsection{Token Generation and Minimization}
%\subsection{Huffman and Baseline Prefix Trees} 
\label{sec:minimize}

Prior work~\cite{ghinita2014efficient,shaham2020enhancing} showed how the process of token generation for an alert zone can considerably improve the computation overhead, if the process of {\em token aggregation} is performed. Specifically, the binary codes corresponding to different regions of an alert zone can be aggregated to yield tokens with few non-star symbols, which in turn reduces the HVE overhead. Binary minimization on fixed-length codes is used for this purpose. For instance, suppose that the alert zone contains cells $0000,\,0010,\,0110,\,0100,$. Instead of separately encrypting the cell indexes and generating four tokens, the TA uses binary minimization to generate a single token $0**0$, and the cost is reduced from twelve HVE operations to two. 
%(assuming a single HVE operation per bit, please see Appendix~\ref{sec:app} for details). 
Binary minimization works when there are many cells in the alert zone, and when the placement of these cells permits code minimization. This approach is suitable when the number of alert cells is significant; however, in practice, alert zones may have cell configurations that do not permit efficient aggregation.

We propose a different token generation approach, where instead of performing binary minimization on fixed-length codes, we control the configuration of tokens based on the assignment of variable-length codes to cells. Algorithm~\ref{Algo: Huffman-HVE Minimization} summarizes this process. Inputs to the algorithm are a set of alert cells and the coding tree. In the initialization phase, the algorithm defines: 

\begin{itemize}
    \item a dictionary of parent nodes ({\em parentDict}) with the number of leaf nodes in the corresponding subtree. This is done by traversing through children of parent nodes and counting the number of leaves located in that subtree. For the sample example, we have the dictionary as  $$[00* : 2,\, 0\!*\!* : 3,\, 1\!*\!* : 2,\, *\!*\!*: 5 ]$$
    \item a list of leaf nodes denoted by {\em leaves}, ordered as they appear on the tree while traversing; no two edges of the tree cross path. Such a list for the sample tree is: $$[v_2:000, v_1:001, v_4:01*,  v_3:10*, v_5:11*].$$
\end{itemize}

The algorithm continues by converting alert cell indexes to codewords on the tree and recoding their associated codeword and the corresponding index in  {\em leaves}. By default, the mapping process splits codewords into clusters that are located consecutively in {\em leaves}. It is important to note that mapping of alert cell indexes to codewords is unique, as demonstrated in Theorem~\ref{Thm: bijective mapping}. The theorem proves a bijective mapping between grid indexes and coding tree codewords. For instance, if the alert cells are $[001,100,110]$, then the mapping would result in leaves $[001,10*,11*]$ for the sample example. Next, the minimization process based on the coding scheme is conducted. The minimization's main idea is to find the common subtree roots that have maximum depths and use them as tokens. All leaves under a common subtree root must be alerted; otherwise, if a user is located in such a leaf node it will be falsely notified to be in an alert zone. 

Continuing with the example and alert cells $[001,10*,11*]$, the algorithm generates two clusters $[10*,11*]$ and $[001]$, and aims to identify the common subtree roots with the maximum depths in each cluster. This is done heuristically in lines~\ref{Start while loop}-~\ref{End while loop}. Suppose that a cluster's length is $L$, the common left-hand side code in all $L$ codewords is calculated and padded with '*' bits to ensure that the codeword length is RL. If the common codeword exists in the dictionary and the number of its children is $L$, the codeword is chosen as representative of its descendent leaves; otherwise, $L$ is decremented by one, and now the first $L-1$ members are checked to see if there exists a common root associated with them. The process continues until the first subtree root is found. For the remaining codewords in the cluster, the algorithm is applied again until all tokens representing codewords in the cluster are selected. A similar approach is repeated for all clusters.

%\mathcal{V}=\{ v_1,\,,v_2,..., v_{n} \}
%$\mathcal{P}=( p(v_1),\,,p(v_2),..., p(v_{n}) )$

\begin{algorithm}[t]
%\SetKwInOut{Input}{Input}\SetKwInOut{Output}{output}
\caption{Deterministic Minimization}\label{Algo: Huffman-HVE Minimization}

\begin{flushleft}
\hspace*{\algorithmicindent} \textbf{Input}:\;\; {\em alertCells}; {\em codingTree};
\end{flushleft}
% \hspace*{\algorithmicindent} \textbf{Output}: \;
\begin{algorithmic}[1]
%\State //Create a dictionary of parent nodes and record the number of children in the subtree
\State {\em parentDict} = $\{ \}$
\For {{\em node} $\in$ {\em codingTree}} %\Comment{Create a node for each cell}
%    \State Calculate number of leaves in the parent subtree
    \State {\em parentDict}[{\em node}.code] = \# descendent leaves
\EndFor

\State  {\em indexHolder, codewordHolder} = []
\State {\em leaves} $\leftarrow$ list of leaf codewords
\For{$i \in$ {\em alertCells}}
    \State {\em memCodeword} $\leftarrow$ Map $i$ to a codeword in {\em leaves}
    \State  {\em codewordHolder} =  {\em codewordHolder} $\cup \{ memCodeword\}$
    \State {\em memIndex} $\leftarrow$ index of {\em memCodeword} in {\em leaves} 
    \State  {\em indexHolder} =  {\em indexHolder} $\cup \{ memIndex\}$
%    Index_holder.append(index of {\em i} \in {\em leaves})
\EndFor

\State // Generate a two dimensional list of clusters
\State {\em Clusters, c} = []
\State $c = c \cup codewordHolder[0]$
\For {$i\in [1:len(codewordHolder)]$}
    \If{$indexHolder[i] = indexHolder[i-1]+1  $}
       \State $c = c \cup codewordHolder[i]$
     \Else 
        \State {\em clusters} $ = $ {\em clusters} $\cup\, c$
        \State $c$ = []
        \State $c = c \cup codewordHolder[i]$   
    \EndIf
\EndFor

\State {\em tokens} = []
\State  $RL \leftarrow \text{depth of tree}$
\For { {\em cluster} $\in$ {\em clusters}} \label{Start while loop} 
    \State $L = len(cluster)$
    \While{$L>1$}
        \State $code \leftarrow$ common bits in $cluster[1:L]$
        \If {$len(code)< RL $}
            \State Pad with $ RL-len(code)$ star bits
        \EndIf        
        \If {$code \in parentDict\,\, \&\,\, parentDict[code] = L$}
            \State $tokens = tokens \cup code$ 
            \State $cluster = cluster[L:len(cluster)]$
            \State $L = len(cluster)$
        \Else 
            \State $L = L-1$
            \If {$L=1$}
                \State $tokens = tokens \cup cluster[L]$   
                \State $cluster = cluster[L:len(cluster)]$
            \EndIf       
        \EndIf
    \EndWhile\label{End while loop}
\EndFor

\State \textbf{return} {\em tokens}
%\EndProcedure
\end{algorithmic}
\end{algorithm}

\begin{thm}\label{Thm: bijective mapping}
    There exists a bijective function between grid indexes and the leaf nodes of the coding tree.
\end{thm}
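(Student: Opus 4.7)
The plan is to exhibit an explicit bijection that exploits the shared origin of grid indexes and leaf codewords in the same set of Huffman prefix codes. By Algorithm~\ref{Algo: Coding Scheme}, each leaf of the coding tree carries a codeword of the form $c\,\ast^{RL-|c|}$ where $c$ is a Huffman prefix code, and the corresponding grid index is $c\,0^{RL-|c|}$. I would define the map $f$ from grid indexes to leaves by sending index $I$ to the leaf whose underlying prefix code $c$ is the unique element of the Huffman code set that appears as a prefix of $I$.

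Well-definedness reduces to verifying that exactly one Huffman code is a prefix of $I$. Suppose two distinct Huffman codes $c_1$ and $c_2$, with $|c_1|\le|c_2|$, were both prefixes of $I$. Then $c_1 = I[0{:}|c_1|]$ and $c_2 = I[0{:}|c_2|]$ would force $c_1$ to be a prefix of $c_2$, contradicting the prefix property that Huffman codes enjoy. Hence the recovered $c$, and thus the target leaf, is unique. Injectivity follows immediately: if $I_1\ne I_2$ both mapped to the leaf associated with $c$, then $I_1 = c\,0^{RL-|c|} = I_2$, a contradiction. Surjectivity is a counting argument: the $n$ cells yield exactly $n$ Huffman codes, which in turn yield $n$ grid indexes on one side and $n$ distinct leaves on the other, so an injection between finite sets of equal cardinality must be a bijection.

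The main obstacle, though not deep, is making the uniqueness of prefix-code recovery from a $0$-padded index airtight. In principle a short prefix code followed by appended zeros could masquerade as a longer prefix code after padding, which is precisely the scenario the prefix property forbids; I would want to invoke that property explicitly via the Kraft-inequality discussion already established in Section~\ref{sec:huff} rather than treat it as self-evident. Once this lemma-sized step is in place, the bijection assembles cleanly from the constructions of Algorithms~\ref{Algo: Coding Scheme} and~\ref{Algo: Huffman Tree}, and one can additionally observe that $f^{-1}$ admits the symmetric description: strip the trailing $\ast$'s from a leaf codeword to recover $c$, then right-pad with zeros to obtain the associated grid index.
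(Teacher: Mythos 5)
Your proposal is correct and follows essentially the same route as the paper's proof: both recover the unique leaf whose underlying prefix code is a prefix of the given zero-padded index, derive uniqueness from the prefix property (two such codes would force one to be a prefix of the other), and conclude bijectivity by counting. Your version is marginally more careful in that it spells out injectivity explicitly before invoking the equal-cardinality argument, a step the paper leaves implicit.
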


\begin{proof}
    We start by proving that for each index on the grid there exists a unique leaf node (codeword) on the tree. Let $\overline{x_1x_2...x_l}$ denote an arbitrary index on the map. There exists at least one leaf on the tree with the codeword $\overline{y_1y_2...y_{r_1}*...*}$ such that $\overline{x_1x_2...x_{r_1}} = \overline{y_1y_2...y_{r_1}}$, as indexes have been generated from leaf nodes of the prefix tree. Suppose that there exist at least two leaf nodes with the codewords $\overline{y_1y_2...y_{r_1}*...*}$ and $\overline{z_1z_2...z_{r2}*...*}$ corresponding to the index $\overline{x_1x_2...x_l}$. Hence, we have the following relationship between the index and codewords on the tree. 
    \begin{align}
        \overline{x_1x_2...x_{r_1}} = \overline{y_1y_2...y_{r_1}}\label{Equ: d1}\\
        \overline{x_1x_2...x_{r_2}} = \overline{z_1z_2...z_{r_2}}\label{Equ: d2}
    \end{align}
        
    Without loss of generality, assume that $r_2\geq r_1$. Hence, equations \ref{Equ: d1} and \ref{Equ: d2} result in
    \begin{equation}
        \overline{y_1y_2...y_{r_1}} = \overline{z_1z_2...z_{r_1}}.
    \end{equation}
    However, this contradicts the prefix property of the codes. Hence, there is a unique leaf node corresponding to each cell index. As there are an equal number of indexes and codewords, there exists a bijective mapping between indexes and codewords.  
\end{proof}

\section{Extension to Non-Binary Codes}

%In this section, we propose a generalization of the proposed method for the construction of codes in an HVE system. The method is explained in the context of location-based alert systems; however, it is applicable in many other practical scenarios, such as banking networks. 

So far, we considered the alphabet of HVE operations to be limited to $\Sigma = \{0,\,1\}$ and the extended alphabet as $\Sigma_* = \Sigma\cup\{*\}$. This is an intuitive way of looking at indexes as they are a series of zeros and ones. However, by extending the alphabet to $\Sigma = \{0,\,1,...,\,B-1\}$ for an arbitrary integer $B\in \{2,..., n-1\}$, we could obtain more compact representations. The special character is also added as $\Sigma_* = \Sigma\cup\{*\}$. 
We re-visit the operations from the previous section for the extended alphabet with $B$ symbols.

1. {\em Prefix tree:} We incorporate an extension of Huffman coding referred to as $B$-ary Huffman to generate the prefix tree. 
%The $B$-ary Huffman codes were proposed by Huffman in the original paper. 
The main idea is to group $B$ least probable symbols (instead of $2$) at each substitution stage of the algorithm. The construction of the prefix tree for our running example grid is shown in Fig.~\ref{Fig: Extension1} in which a $3$-ary or Huffman code is used. Initially, the algorithm starts by combining nodes $v_2$, $v_1$, and $v_4$, as they correspond to a group of three nodes with the minimum total weight, generating the node $r_1$. Next, the nodes $r_1$, $v_3$, and $v_5$ are combined, and the root node $r_2$ is generated. The weights and other characteristics of the nodes are stored and calculated in the same way as the binary Huffman tree. The codes associated with the tree are generated by assigning an empty character to the root node and then traversing the tree. At each branching node, when following the $i$th child edge, character $i-1$ is added to the root string. As an example, prefix code '02' is generated by adding character '0' at $r_1$, and character '2' by moving to node $v_4$. As in the case of the binary case, we are interested in codes assigned to internal nodes as well as the prefix codes generated at the leaves.  %Therefore, information regarding all nodes needs to be stored.

%\begin{figure}[!h]
%\centering
%\includegraphics[scale=.45]{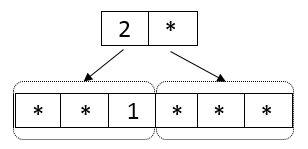}
%\hspace{1em}
%\centering
%\caption{Distribution of probabilities.}
%\label{Fig: Probability Distribution}
%\end{figure}

\begin{figure}[t]
	\subfloat[]{%
	\includegraphics[scale=.45]{holder1.png}
	\label{Fig: holder1}
	}
	\hfill
	\subfloat[]{%
	\includegraphics[scale=.45]{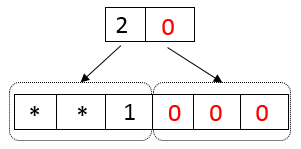}
	\label{Fig: holder2}
	}
	\vspace{-10pt}
	\caption{Expansion process.}
	\vspace{-25pt}
%	\label{Fig: Matching Process}
\end{figure}

2. {\em Coding tree:} The generation of the coding tree requires an additional step compared to the binary Huffman tree.  In the first step, codes are padded with star characters until they reach the same length as the RL. The padded prefix tree for our running example is shown in Fig.~\ref{Fig: Extension2}. Next, we expand each character to an array of $B$ bits. The character $i\in \Sigma$ is converted to $B$ bits with the ($i$+$1$)-th bit set to 1 and star bits otherwise. The only exception is the star character, which will be mapped to a string of length $B$ with all bits set to '*'. As an example, the expansion of $2*$ is shown in Fig.~\ref{Fig: holder1}.

Each original character essentially works as a placeholder for the expanded representation.  The final coding tree generated for our example is shown in Fig.~\ref{Fig: Extension3}.

\begin{figure*}[b]
\centering
	\subfloat[Ternary Huffman coding tree.\label{Fig: Extension1}]{%
	\includegraphics[scale = 0.45]{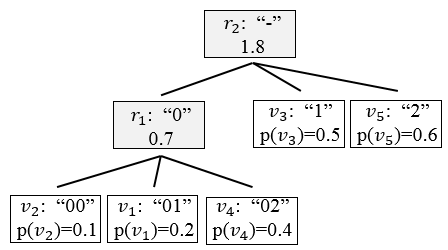}
	}	
	%\hfill
	\subfloat[Placeholders.\label{Fig: Extension2}]{%
		\includegraphics[scale = 0.45]{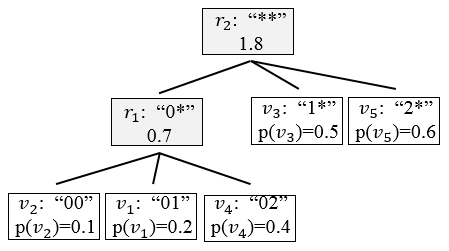}	
	}
	%\hfill
	\subfloat[Coding tree.\label{Fig: Extension3}]{%
		\includegraphics[scale=.42]{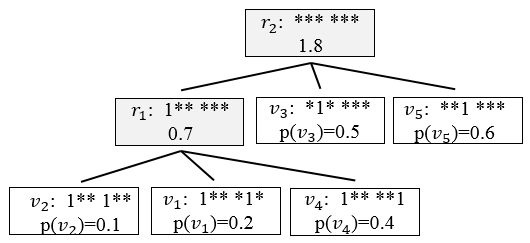}
    }
\begin{comment}
	\hfill
	\subfloat[Assigned indexes.\label{Fig: Extension4}]{%
		\includegraphics[scale=.35]{}
    }
\end{comment}
\vspace{-10pt}
	\caption{Sample coding tree for extended framework.}
\end{figure*}

3. {\em Indexes:} We generate indexes by padding the leaves in the prefix tree by zeros, and then expanding the codes. An interesting case occurs that gives the TA the opportunity to increase the grid's granularity further if desired. Consider the prefix code '2', which will be zero-padded to generate '20'. The expansion process requires two steps: (i) zeros generated by the padding process are mapped to $B$ bits; (ii) each character $i\in \Sigma$ is expanded to $B$ bits with the ($i$+$1$)-th bit set to 1, and star bits otherwise. The expansion of '20' is demonstrated in Fig.~\ref{Fig: holder2}. 

%\begin{figure}[h!]
%\centering
%\includegraphics[scale=.45]{holder2.png}
%\hspace{1em}
%\centering
%\caption{Distribution of probabilities.}
%\label{Fig: Probability Distribution}
%\end{figure}

The additional star bits in the index are converted to zeros. The advantage of the approach is revealed when we increase the granularity of a grid cell in a later stage in time. This can be done by exploiting the star bits generated in the last step without violating the structure of the grid or the coding tree. Consider the index '20' corresponding to cell $v_5$ one more time. This string was first converted to '**1000' and then to '001000'. Suppose, later on, the TA decides to increase the granularity of $v_5$ to four cells. This can simply be done by using four indexes '001000', '011000', '101000', '111000' generated based on star bits with all of them lying under character '2'. The coding tree is also updated accordingly via placeholders for the character '2' without violating the tree's prefix property.

\section{Encryption Overhead}

Employing variable-length codes into HVE can significantly improve the computation complexity at the SP, but there is a a trade-off with respect to increased encryption time. When variable-length encoding is used, all ciphertexts submitted by the mobile users to the SP must have the maximum length of any existing code. Otherwise, the length of the ciphertext would enable the SP to pinpoint the location of the submitting user to one of the cells that are assigned a code with bit length equal to the one submitted. To thwart such attacks, all codes are padded before encryption to the maximum possible length, i.e. RL. In this section, we analyze this additional encryption overhead, and we show that it is not significant, especially compared to the savings at the SP. Furthermore, the additional computational load is spread over the user population, since each user encrypts its own location independently, and no bottleneck is created (as opposed to the alert matching overhead which is centrally incurred at the SP).

In our analysis, we make use of the following result:
\begin{thm}\label{Thm: Huffman Length}
    The depth of a B-ary Huffman tree (RL) with $n$ leaves is less than or equal to $\lceil \dfrac{n-1}{B-1}  \rceil$.
\end{thm}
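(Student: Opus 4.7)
The plan is to bound the depth of the tree by the total number of internal nodes and then count internal nodes exactly from the structure of the $B$-ary Huffman merge procedure.

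First, I would determine how many internal nodes appear in the $B$-ary Huffman tree built on $n$ leaves. Each merge step of the algorithm combines $B$ current nodes into one new internal node, decreasing the priority queue size by $B-1$. To guarantee that the very last merge collapses exactly $B$ nodes into the root, we use the standard convention of pre-padding with $d$ dummy leaves of weight $0$, where $d$ is the smallest nonnegative integer with $(B-1)\mid(n+d-1)$; note $0 \leq d \leq B-2$. Writing $n' = n+d$, the algorithm then performs exactly
\[
k \;=\; \frac{n'-1}{B-1} \;=\; \left\lceil \frac{n-1}{B-1} \right\rceil
\]
merge steps, producing exactly $k$ internal nodes. (The equality of the right-hand side with the ceiling follows because $d$ is chosen minimally.)

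Second, I would show by induction on the number of internal nodes $k$ that any rooted $B$-ary tree $T$ has depth at most $k$. The base case $k=0$ is a single leaf of depth $0$. For the inductive step, removing the root leaves at most $B$ subtrees whose internal-node counts sum to $k-1$; by induction each such subtree has depth at most $k-1$, hence $T$ itself has depth at most $k$. Applied to the Huffman tree, this yields $\mathrm{RL} \leq k = \lceil (n-1)/(B-1)\rceil$, as required.

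The main obstacle is the dummy-leaf bookkeeping: one must confirm that the padding does not inflate the bound, i.e.\ that $\lceil (n'-1)/(B-1)\rceil = \lceil (n-1)/(B-1)\rceil$, which holds precisely because $d$ is the minimal padding. A cleaner alternative, avoiding padding altogether, is to observe that even without dummy leaves, every merge reduces the queue size by at most $B-1$, so the total number of merges still satisfies $k \leq \lceil (n-1)/(B-1)\rceil$; this gives the same bound. Either route reduces the theorem to the elementary depth-vs-internal-nodes inequality, and no heavier machinery is required.
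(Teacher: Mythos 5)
Your main argument is correct and follows essentially the same route as the paper: count the merge steps of the $B$-ary Huffman procedure (each reducing the queue by $B-1$), conclude there are $\lceil (n-1)/(B-1)\rceil$ internal nodes, and bound the depth by the internal-node count. You are in fact more careful than the paper on two points: you handle the divisibility issue explicitly via dummy leaves, and you prove the step ``depth $\leq$ number of internal nodes'' by induction, which the paper asserts implicitly.

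One caution about your closing ``cleaner alternative'': the implication there runs backwards. If every merge reduces the queue size by \emph{at most} $B-1$, you obtain a \emph{lower} bound on the number of merges, namely $k \geq \lceil (n-1)/(B-1)\rceil$, not the upper bound you claim. The upper bound needs each merge to remove at least $B-1$ nodes (or the padding argument you already gave), so keep the dummy-leaf route as the actual proof and drop or repair that remark.
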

\begin{proof}

The theorem can be proved by counting the number of internal nodes in a B-ary Huffman tree. Consider a tree with $n$ leaves generated by the Huffman mechanism. At every run of the algorithm, $B$ less likely remaining nodes in the priority queue are combined, and a new internal node is inserted. Suppose that the Huffman mechanism is conducted $x$ times over the priority queue until a single node, i.e. root node, is left in the queue. The maximum value of integer $x$ can be derived as:

\begin{equation}
\underset{x}{max}\;\;\{ n - x(B-1)\geq 1\} \rightarrow x =  \lceil \dfrac{n-1}{B-1}  \rceil  
\end{equation}
Therefore, the maximum possible depth of a B-ary Huffman tree is $\lceil \dfrac{n-1}{B-1}  \rceil$.
\end{proof}

Let $L_E$ denote the difference between the RL of an encoding grid with $n$ cells generated by Huffman coding and fixed-length codes. We start by deriving an upper bound for $L_E$ when $\Sigma_* = \{0,1 \} \cup\{*\}$, and then extend the upper bound for an arbitrary size alphabet. Without loss of generality, consider that RL in the binary Huffman tree is $l_n$. The minimum possible value for $l_n$ is $\lceil \log_2n \rceil$. Based on Theorem~\ref{Thm: Huffman Length}, $L_E$ can be written as:
\begin{equation}
    L_E(B=2,n) = l_n - \lceil \log_2n \rceil \leq \lceil \dfrac{n-1}{2-1}  \rceil - \lceil \log_2n \rceil = n-1 -\lceil \log_2n \rceil
\end{equation}

A tighter upper-bound for RL in a binary Huffman tree can be derived based on the following theorem proven in~\cite{buro1993maximum} (we omit the proof for brevity):  

\begin{thm}\label{Thm: tight Huffman upperbound}
    Let $p_n$ and $l_n$ denote the minimum probability and its corresponding length existing on the Huffman tree. Then, 
    \begin{equation}
        l_n\leq log_\phi{\dfrac{1}{p_n}}
    \end{equation}
    where $\phi$ denotes the {\em golden ratio}, i.e., $\phi = (1+\sqrt{5})/2$.
\end{thm}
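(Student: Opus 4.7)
The plan is to exploit a Fibonacci-like growth of subtree weights along the path from the minimum-probability leaf to the root, which is the classical route to bounding depths in Huffman trees. Normalize so that $\sum_i p(v_i) = 1$ (the Huffman topology is invariant under rescaling, and both sides of the claimed inequality scale identically). Let $v^{\star}$ be a leaf of weight $p_n$ sitting at depth $l_n$, and for $i = 0, 1, \dots, l_n$ let $s_i$ denote the weight of the ancestor of $v^{\star}$ at depth $l_n - i$, so $s_0 = p_n$ and $s_{l_n} = 1$. For $i \geq 1$ write $s_i = s_{i-1} + b_i$, where $b_i$ is the weight of the sibling subtree that is fused in at the Huffman merge (call it step $t_i$) producing the ancestor at depth $l_n - i$.

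The engine of the proof is the Fibonacci inequality
\[
s_i \;\geq\; s_{i-1} + s_{i-2} \quad (i \geq 2), \qquad s_1 \geq 2 p_n.
\]
The base case $s_1 \geq 2 p_n$ is immediate: at step $t_1$ the minimum-weight leaf $v^{\star}$ is merged with another node whose weight is at least $p_n$, because $p_n$ is the global minimum. For the recurrence I plan to show $b_i \geq s_{i-2}$ by tracking the priority queue: at step $t_{i-1}$ the Huffman algorithm selects the two smallest available weights, $s_{i-2}$ and $b_{i-1}=s_{i-1}-s_{i-2}$, so every unmerged node still in the queue after $t_{i-1}$ must have weight at least $\max(s_{i-2}, b_{i-1}) \geq s_{i-2}$. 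Any further merge performed between steps $t_{i-1}$ and $t_i$ fuses two such nodes and therefore produces a subtree of weight $\geq 2 s_{i-2} \geq s_{i-2}$. Consequently, every node available at step $t_i$ other than the ancestor of weight $s_{i-1}$ has weight $\geq s_{i-2}$, and the second smallest $b_i$ inherits this bound.

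Unrolling the recurrence gives $s_i \geq F_{i+2}\, p_n$, where $F_k$ is the Fibonacci sequence with $F_1 = F_2 = 1$. A short induction on $k$, using the defining identity $\phi^2 = \phi + 1$, yields $F_k \geq \phi^{k-2}$ for $k \geq 2$. Specializing to $i = l_n$ and invoking $s_{l_n} = 1$ produces
\[
\phi^{l_n} \;\leq\; F_{l_n + 2} \;\leq\; \frac{1}{p_n},
\]
and taking $\log_\phi$ of both sides delivers $l_n \leq \log_\phi(1/p_n)$.

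The main obstacle is the combinatorial bookkeeping behind $b_i \geq s_{i-2}$: it is elementary but subtle, because one must argue that nothing lighter than $s_{i-2}$ can appear in the priority queue between the two relevant Huffman steps, including any newly created intermediate subtrees. Once this lower bound is in place, unrolling the Fibonacci recurrence and applying the $\phi^{k-2}$ estimate are routine.
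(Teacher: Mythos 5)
Your proof is correct, but there is no in-paper argument to compare it against: the paper imports this theorem from the cited reference (Buro, ``On the maximum length of Huffman codes'') and explicitly omits the proof. What you have written is the classical Katona--Nemetz/Buro Fibonacci argument, and it is sound. The crux, $b_i \geq s_{i-2}$, is justified exactly as you describe: after the merge at step $t_{i-1}$ every node remaining in the queue has weight at least the second-smallest of the two weights just merged, hence at least $s_{i-2}$, and any subtree created by an intermediate merge is a sum of two such nodes and so cannot drop below that bound; therefore the sibling chosen at step $t_i$ satisfies $b_i \geq s_{i-2}$. The base case $s_1 \geq 2p_n$ correctly uses global minimality of $p_n$, and unrolling gives $F_{l_n+2}\,p_n \leq s_{l_n} = 1$ together with $F_k \geq \phi^{k-2}$, which yields the claim. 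One small caveat: your normalization remark (``both sides of the claimed inequality scale identically'') is imprecise --- $l_n$ is scale-invariant while $\log_\phi(1/p_n)$ shifts by an additive constant under rescaling, and the inequality is simply false for unnormalized weights --- but since the theorem is only meaningful for weights summing to one, which is the paper's standing assumption, nothing is lost.
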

\noindent Therefore, a tighter upper-bound for $L_E$ can be written as 
\begin{equation}
    L_E(B=2,n) \leq  log_\phi{\dfrac{1}{p_n}} -\lceil \log_2n \rceil
\end{equation}

\begin{figure}[t]
\centering
\includegraphics[scale=.55]{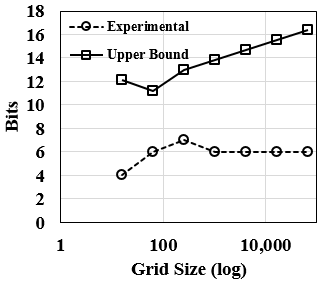}
\hspace{1em}
\centering
\vspace{-10pt}
\caption{ Upper bound of $L_E$ for Binary Huffman codes.}
\vspace{-15pt}
\label{Fig: Upper Bound}
\end{figure}

The numerical and analytical values of $L_E$ are verified\footnote{Grid probabilities are generated with the parameters of sigmoid function set to $a = 0.95$ and $b=20$. Please refer to Section 6 for details.} for binary Huffman coding in Fig.~\ref{Fig: Upper Bound}. 
%Encryption of every binary bit requires approximately $12$ms of time~\cite{ghinita2014efficient}; this number has been further reduced with the development of devices with more powerful computation capabilities.

Now let us extend the approach for B-ary Huffman codes generated with the alphabet $\Sigma_* = \{0,\,1,...,\,B-1\}\cup\{*\}$. Based on information theory, the minimum length of RL corresponding to fixed-length codes is derived as $\lceil \log_Bn \rceil$. Therefore, the upper-bound for $L_E$ can be computed as: 

\begin{align}
    L_E(B,n) = B(l_n - \lceil \log_Bn \rceil) &\leq B(\lceil \dfrac{n-1}{B-1}  \rceil - \lceil\log_Bn\rceil)\\
                           &\leq B( \dfrac{n-1}{B-1} +1  - \lceil\log_Bn\rceil)
\end{align}
The multiplier $B$ is required to map the alphabet to $0$s and $1$s, used in the encryption.

\begin{align}
    E[L_E(n)] \leq &\dfrac{1}{n-1}(\mathlarger{\sum}_{i= 2}^{n}\dfrac{i(n-1)}{i-1}
                   +\mathlarger{\sum}_{i= 2}^{n} i
                   - \mathlarger{\sum}_{i= 2}^{n} i\lceil\log_in\rceil)
\end{align}

The first and second summation in the upper-bound of $E[L_E(n)]$ can be further simplified as
\begin{align}
    \mathlarger{\sum}_{i= 2}^{n}\dfrac{i(n-1)}{i-1} = &(n-1)\times (n-1+\mathlarger{\sum}_{i= 2}^{n}\dfrac{1}{i-1})\\
    & \approx (n-1)\times (n-2 + \ln (n-1) +\dfrac{1}{2(n-1)}+ \gamma)
\end{align}
and,
\begin{equation}
    \mathlarger{\sum}_{i= 2}^{n} i= \dfrac{n^2+n-1}{2}
\end{equation}

where $\gamma\approx 0.577$ is the {\em Euler-Mascheroni} constant. The approximation for the $n$th Harmonic can be derived by its asymptotic expansion in the {\em Hurwitz zeta} function~\cite{chen2015ramanujan}.

%The length is multiplied by $B$ as each symbol bit is mapped to an array of length $B$ including alphabet $\Sigma_* = \{0,1 \} \cup\{*\}$ to be implemented in the system. {\color{red} Figure --- in an extreme it is $B=n$ and $L_E$ is zero}. 

\section{Security Discussion}\label{sec:discussion}

Our proposed technique uses as building block HVE primitives as introduced in~\cite{boneh2007conjunctive}, and hence inherits the security properties of HVE, namely IND-CCA under the bilinear Diffie-Hellman assumption. In terms of ciphertext processing semantics, the security achieved by our technique is similar to existing work in the area of secure computation, namely the only leakage that occurs as part of ciphertext matching is the evaluation outcome. Specifically, the SP learns only whether the user is included in the alert zone (which is a necessary condition for correctness), and no other information. The SP does not learn where exactly the user is located within the alert zone, if the match is successful; conversely, if the match is not successful, the SP  learns only that the user is not inside the alert zone, but cannot further narrow down the user within the data domain.

Furthermore, our technique is guided by statistical information that is derived solely from public data. Namely, the heuristic on how to encode cells does not use any user location data, but strictly likelihood scores that are assigned to grid cells, based on public knowledge regarding the {\em alert zone} properties, such as site popularity, etc. No private information regarding any system user is included in the encoding process (not even aggregate data, such as user distribution, etc).

Finally, the encryption strength achieved by HVE depends on the underlying bilinear pairing curve used~\cite{boneh2007conjunctive}. Modern elliptic-curve pairing-based cryptography can easily provide 128-bit security, which is on par with commercial database applications such as banking, or heathcare data security standards.

\section{Experimental Evaluation}\label{Experimental Evaluation}

We conduct our experiments on a $3.40$GHz core-i7 Intel processor with 8GB RAM running $64$-bit Windows $7$ OS. The code is implemented in Python.
We evaluate our methods on both real and synthetic datasets, as follows:

\begin{itemize}
    \item {\em Chicago Crime Dataset.} This dataset is provided by the Chicago Police Department's CLEAR (Citizen Law Enforcement Analysis and Reporting) system~\cite{chicago}. The dataset consists of reported incidents of crime that occurred in the city of Chicago in 2015. We consider four categories of crime: homicide, sexual assault, sex offense, and kidnapping. Fig.~\ref{Fig: chicago stats} shows data statistics. A $32\times32$ grid is overlaid on top of the dataset, and a logistic regression model is trained with the crime data from January to November 2015, and tested on the December data. The accuracy of the model is $92.9\%$ and the generated likelihood scores based on the model are used as input to our techniques. 
    \item {\em Synthetic data.} We generate the likelihood of grid cells to be part of an alert zone using a sigmoid activation function $\mathcal{S}(X=x)= 1/(1+\exp^{-b(x-a)})$, where $a$ and $b$ are parameters controlling the function shape. For each data point (i.e., cell) $x$, a uniformly random number between zero and one is generated, i.e., $x\in X \sim\text{uniform}(0,1)$. Then, the number is fed into the sigmoid activation function. The output is a value between zero and one indicating the likelihood of the cell to be inside an alert zone. The sigmoid function is a frequent model used in machine learning, and we choose it because we expect that, in practice, the probability of individual cells becoming part of an alert zone can be computed using such a model built on a regions' map of features (e.g., type of terrain, building designation, etc.). Parameter $a$ of the sigmoid controls the \textit{inflection} point of the curve, whereas $b$ controls the gradient.% Fig.\ref{Fig: Sigmoid} provides three-dimensional plots of the sigmoid activation functions used. 
\end{itemize}

%We compare the proposed approaches (GO, MGSO and SGO) against the hierarchical Gray encoding technique from~\cite{ghinita2014efficient} (labeled {\em HGE}), the state-of-the-art in location alerts on HVE-encrypted data.

\begin{figure}[t]
\centering
\includegraphics[scale=.55]{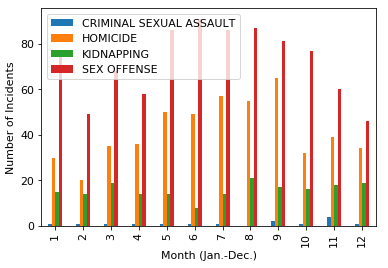}
\hspace{1em}
\centering
\vspace{-10pt}
\caption{Chicago crime dataset statistics.}
\label{Fig: chicago stats}
\vspace{-20pt}
\end{figure}

\begin{comment}
\begin{figure}[t]
\centering
\includegraphics[scale=.55]{}
\hspace{1em}
\centering
\caption{Distribution of probabilities.}
\label{Fig: Probability Distribution}
\end{figure}
\end{comment}

\begin{comment}
\begin{figure}[t]
	\subfloat[x=1]{%
	\includegraphics[scale=.44]{}
	}
	\hfill
	\subfloat[x=0.5]{%
	\includegraphics[scale=.44]{}
	}
	\hfill
	\centering
	\subfloat[x=0.1]{%
	\includegraphics[ scale=.44]{}
	}
	%\vspace{-10pt}
	\caption{Sigmoid activation function for synthetic data}
	\label{Fig: Sigmoid}
\end{figure}	
\end{comment}

\begin{comment}
\begin{figure}[t]
	\subfloat[]{%
	\includegraphics[scale=.38]{}
	}
	\hfill
	\subfloat[]{%
	\includegraphics[scale=.38]{}
	}
	\vspace{-10pt}
	\caption{Evaluation on Chicago crime dataset.}
	\label{Fig: Chicago performance evaluation}
	\vspace{-20pt}
\end{figure}	
\end{comment}

\begin{figure}[t]
	\subfloat[]{%
	\includegraphics[scale=.29]{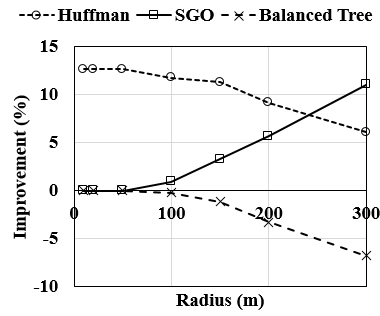}
	}
	\hfill
	\subfloat[]{%
	\includegraphics[scale=.29]{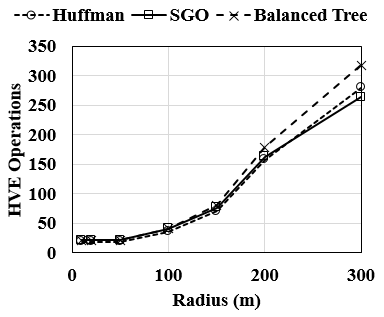}
	}
	\vspace{-10pt}
	\caption{Evaluation on Chicago crime dataset.}
	\label{Fig: Chicago performance evaluation}
	\vspace{-20pt}
\end{figure}

\begin{comment}
\begin{figure*}[h!]
	\subfloat[a=0.85, b=10 ]{%
	\includegraphics[scale=.40]{}
	}
	\hfill
    \subfloat[a=0.85, b=10 ]{%
	\includegraphics[scale=.40]{}
	}
	\hfill
	\subfloat[a=0.85, b=20]{%
	\includegraphics[scale=.40]{}
	}
	%\vspace{-10pt}
	\hfill
	\subfloat[a=0.85, b=20 ]{%
	\includegraphics[scale=.40]{}
	}
	\hfill
    \subfloat[a=0.9, b=10 ]{%
	\includegraphics[scale=.40]{}
	}
	\hfill
	\subfloat[a=0.9, b=10]{%
	\includegraphics[scale=.40]{}
	}
	\hfill
	\subfloat[a=0.9, b=20 ]{%
	\includegraphics[scale=.40]{}
	}
	%\vspace{-10pt}
	\hfill
	\subfloat[a=0.9, b=20 ]{%
	\includegraphics[scale=.40]{}
	}
	\hfill
    \subfloat[a=0.95, b=10]{%
	\includegraphics[scale=.40]{}
	}
	\hfill
	\subfloat[a=0.95, b=10 ]{%
	\includegraphics[scale=.40]{}
	}
	\hfill
    \subfloat[a=0.95, b=20]{%
	\includegraphics[scale=.40]{}
	}
	\hfill
	\subfloat[a=0.95, b=20 ]{%
	\includegraphics[scale=.40]{}
	}
	\caption{Performance evaluation on synthetic dataset.}
	\label{Fig: Statistical approach}
		\vspace{-10pt}

\end{figure*}	
\end{comment}

\begin{figure*}[h!]
	\subfloat[a=0.9, b=10 ]{%
	\includegraphics[scale=.29]{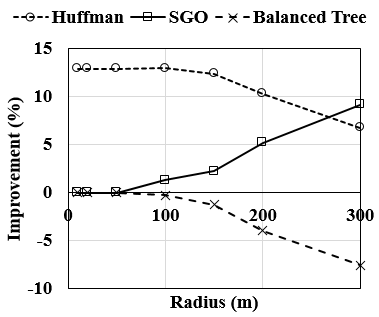}
	}
	\hfill
	\subfloat[a=0.9, b=10 ]{%
	\includegraphics[scale=.29]{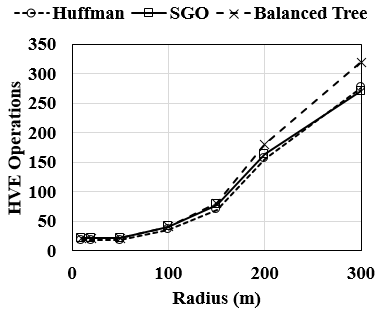}
	}
	\hfill
	\subfloat[a=0.9, b=100 ]{%
	\includegraphics[scale=.29]{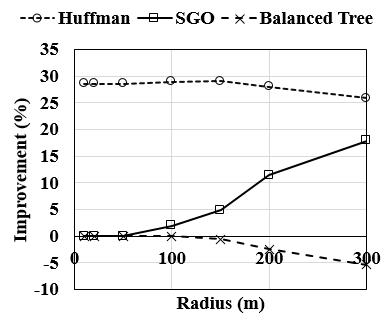}
	}
	%\vspace{-10pt}
	\hfill
	\subfloat[a=0.9, b=100 ]{%
	\includegraphics[scale=.29]{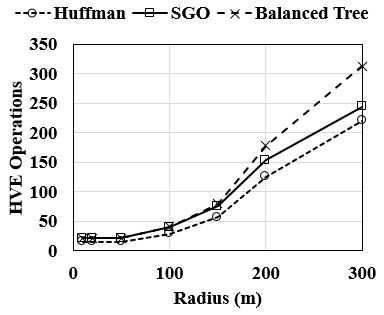}
	}
	\vspace{-10pt}
	\hfill
	\subfloat[a=0.9, b=200 ]{%
	\includegraphics[scale=.29]{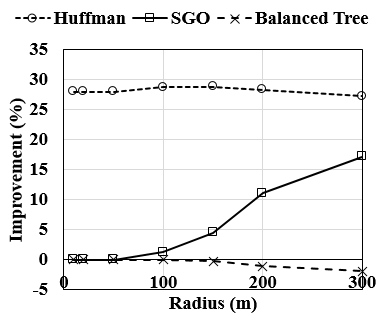}
	}
	\hfill
	\subfloat[a=0.9, b=200 ]{%
	\includegraphics[scale=.29]{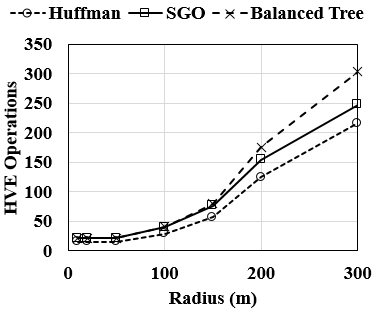}
	}
	\hfill
	\subfloat[a=0.99, b=10 ]{%
	\includegraphics[scale=.29]{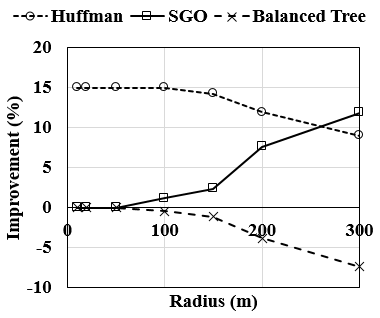}
	}
	%\vspace{-10pt}
	\hfill
	\subfloat[a=0.99, b=10 ]{%
	\includegraphics[scale=.29]{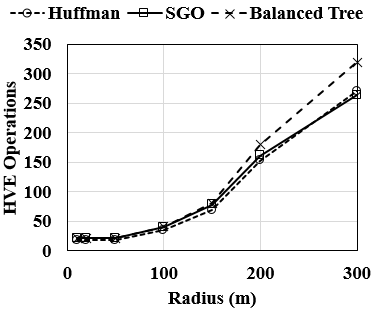}
	}
	\vspace{-10pt}
	\hfill
	\subfloat[a=0.99, b=100 ]{%
	\includegraphics[scale=.29]{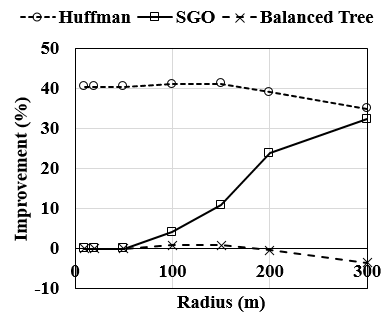}
	}
	\hfill
	\subfloat[a=0.99, b=100 ]{%
	\includegraphics[scale=.29]{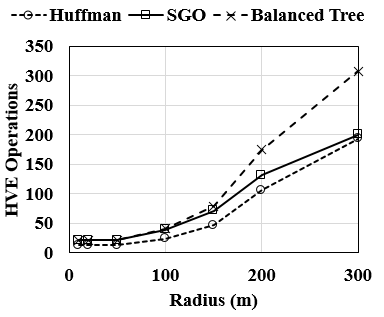}
	}
	\hfill
	\subfloat[a=0.99, b=200 ]{%
	\includegraphics[scale=.29]{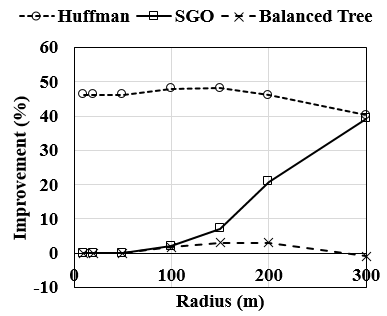}
	}
	\hfill
	\subfloat[a=0.99, b=200 ]{%
	\includegraphics[scale=.29]{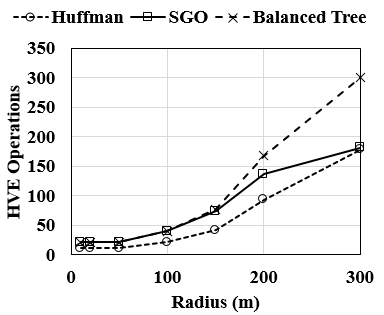}
	}
	\vspace{-10pt}
	\caption{Performance evaluation on synthetic dataset.}
	\label{Fig: Statistical approach}
		\vspace{-10pt}

\end{figure*}

%We compare our proposed variable-length encoding scheme with the state-of-the-art fixed-length approaches: (i) the work in~\cite{ghinita2014efficient}, which was the first to use HVE for location alerts, and considers all cells are equally likely to be in an alert zone; and (ii) the state-of-the-art approach Scaled Gray Optimizer (SGO) from~\cite{shaham2020enhancing} which uses graph embedding to reflect cell probabilities in the way cell codes are chosen.

We compare our proposed variable-length encoding scheme with the state-of-the-art fixed-length approach scaled gray optimizer (SGO) from~\cite{shaham2020enhancing}, which uses graph embedding to reflect cell probabilities in the way cell codes are chosen. We also consider as a second benchmark an approach that uses balanced trees, as opposed to Huffman trees.

We use as performance metric the number of HVE bilinear map pairing operations incurred by each technique (which are the most expensive component of the overhead). We present both absolute counts, as well the percentage of {\em improvement} compared to the original fixed-length encoding HVE approach introduced in~\cite{ghinita2014efficient} (which assumes all cells are equally likely to be alerted). %The aforementioned percentage indicates how much improvement in computational overhead is achieved with respect to the work in~\cite{ghinita2014efficient} (case I above).   

\subsection{Evaluation on Real Dataset}

Fig.~\ref{Fig: Chicago performance evaluation} shows the performance results obtained on the real dataset. 
The $x$-axis in each graph indicates the size of the alert zone (expressed as radius). For low radii values, the SGO algorithm fails to provide significant improvement, due to the fact that the binary minimization process used by fixed-length encoding approaches is unable to aggregate tokens. In contrast, the proposed variable-length technique using Huffman encoding is able to provide gains of up to 15\% compared to the baseline. In practice, we expect alert zones to be relatively compact compared to the data domain, hence this case is frequently occurring in practice. Furthermore, the results show the superiority of the Huffman code compared to generic variable-length encodings, as the balanced-tree approach benchmark does not produce any improvement.

As the size of alert zone increases, SGO improves, whereas the gain of Huffman encoding decreases. This is expected, since with very large alert zones, it is easy to aggregate tokens, by grouping together cells with low Hamming distance between their codes. However, such an improvement can only be reached when the alert zones are very large, which is not a realistic scenario in practice. In general, the size of alert zones is expected to be small, and their distribution in the data domain sparse, which would further diminish the potential of SGO (and other binary minimization approaches) to produce performance gains, as aggregation requires clustered cells with similar binary codes.

\subsection{Evaluation on Synthetic Dataset}

Performance evaluation results for synthetic data are summarized in Fig.~\ref{Fig: Statistical approach}. We use two inflection points for the sigmoid function $a=0.90,0.99 $, as well as three gradient values $b=10$, $b=100$ and $b=200$.
A similar trend to the real dataset is observed. The Huffman tree approach achieves significantly better performance when the alert zones are compact, which is the expected case in practice.

Two other trends can be observed with respect to the parameters of the sigmoid function. First, a higher inflection point setting results in a more skewed distribution probability on the grid, and leads to a higher performance gain for Huffman encoding compared to competitor approaches. 
The performance gain can be as high as 50\%. This is a positive aspect, since in real life one expects alert cell probabilities to be quite skewed, where more popular areas are visited by more individuals, hence there is more potential for alert events (e.g., public-safety alerts, or visits of a COVID-infected patient to points of interest).
Second, an increase in the gradient of initial probabilities ($b$) also improves the performance gain of Huffman encoding.

We also conducted an experiment under mixed-workload conditions. 
We consider several mixes between short-radius ($20$ meters) and long-radius ($300$ meters) alert zones: {\em W1} (90\% short-10\% long); {\em W2} (75\% short-25\% long);  {\em W3} (25\% short-75\% long); and  {\em W4} (10\% short-90\% long). Results are summarized in Fig.~\ref{Fig: mix}. Our proposed technique outperforms SGO for all considered cases. For mostly-compact alert zones ({\em W1}), the improvement is much higher than that of SGO, with absolute values of up to 40\%.

On the synthetic data, we are also able to perform more in-depth tests where we vary the parameter settings of our proposed approach.
In Fig.~\ref{Fig: Varying grid size}, we vary the grid granularity. The results are obtained for $a=0.95$ and $b=20$. The results show that higher grid granularities lead to higher performance overhead, which is expected, since more cells need to be encoded and encrypted, and thus code lengths increase.
We also observe an interesting trend:
the improvement for a low number of alert cells decreases at higher granularity levels. As the number of grid cells grow, and considering the same sigmoid activation function parameters, there will be more cells with low probabilities of becoming an alert cell. Therefore, the Huffman tree tends to have higher depths. This can be observed more accurately in Fig.~\ref{Fig: Ratio}, where we show the ratio of average length to the maximum length of the Huffman tree for various grid sizes. Hence, the improvement achieved by deterministic minimization lags behind the logic minimization approach, leading to a smaller improvement percentage. 

\begin{figure}[t]
    \centering
	\subfloat[a=0.9, b=100]{%
	\includegraphics[scale=.3]{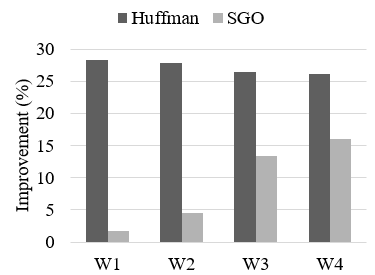}
	}
	\hfill
	\subfloat[a=0.99, b=100]{%
	\includegraphics[scale=.3]{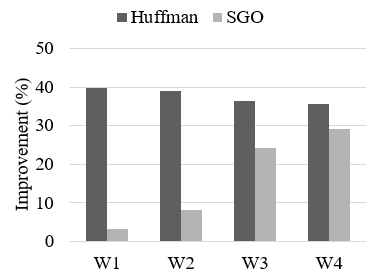}
	}
	\caption{Mixed workloads, synthetic dataset.}
	\vspace{-15pt}
	\label{Fig: mix}
\end{figure}

Finally, we present the run time required to generate indexes and the coding tree in Fig.~\ref{Fig: Time Analysis}. Note that, this is a one-time setup cost, as the process is only run when initializing the system, and has no effect on run-time performance. In the worst case, the process takes minutes for larger-granularity grids.

\begin{figure}[t]
    \centering
	\subfloat[]{%
	\includegraphics[scale=.4]{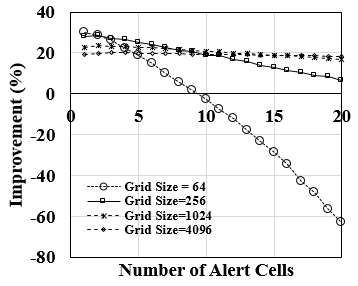}
	}
	\hfill
	\subfloat[]{%
	\includegraphics[scale=.4]{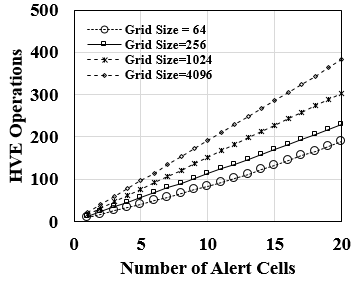}
	}
	\vspace{-5pt}
	\caption{Varying grid granularity, synthetic dataset.}
	\vspace{-10pt}
	\label{Fig: Varying grid size}
\end{figure}

\begin{figure}[t]
\centering
\includegraphics[scale=.5]{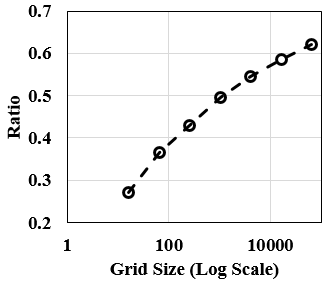}
\hspace{1em}
\centering
\vspace{-10pt}
\caption{Average-to-maximum code length ratio.}
\label{Fig: Ratio}
	\vspace{-15pt}

\end{figure}

\begin{figure}[t]
\centering
\includegraphics[scale=.5]{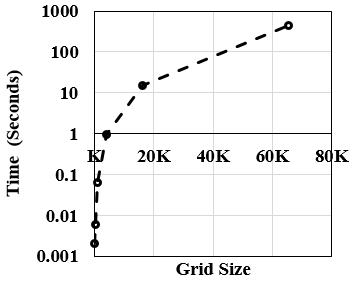}
\hspace{1em}
\centering
\caption{System Initialization Time}
\label{Fig: Time Analysis}
	\vspace{-15pt}

\end{figure}

\section{Related Work}

{\bf Location Privacy.} Early works on location data privacy pivoted around the $k$-anonymity~\cite{sweeney2002k} model. The main idea is to hide users' location among at least k-1 other users to protect user privacy. A preliminary approach to achieve $k$-anonymity was focused on the generation dummy (fake) locations for data points~\cite{kido2005anonymous}. Unfortunately, dummy generation algorithms are shown to be susceptible to inference attacks~\cite{shaham2020privacy}.

An alternative proposed method to achieve $k$-anonymity has been focused on the concept of {\em Cloaking Regions (CRs)}~\cite{gruteser2003anonymous}. Most approaches in this category take advantage of a trusted anonymizer to generate a cluster of $k$ user locations and query the area locations are enclosed by, achieving $k$-anonymity~\cite{gedik2005location,kalnis2007preventing,mokbel2006new}. Approaches based on CRs are effective in a single snapshot~\cite{kalnis2007preventing}; however, once users are considered in trajectories, requiring continuous queries, privacy concerns are posed on the system by inference attacks. Moreover, large CRs are needed in trajectories, significantly reducing the utility of data~\cite{chow2007enabling} as well as posing privacy risks due to inference attacks. The authors in~\cite{damiani2008probe,gruteser2004protecting} aim at providing privacy by distinguishing between sensitive and non-sensitive locations based on user preferences.

%More recently, a metric termed as {\em differential privacy} was incorporated for privacy preservation in statistical datasets~\cite{dwork2006calibrating}. The proposed model is promising for aggregated queries; however, it is not suitable for private retrieval of specific data from datasets. Closer to HVE, a private information retrieval (PIR) protocol based on {\em cryptography} has been proposed in~\cite{ghinita2008private}. The PIR approach is shown to guarantee private information retrieval based on the underlying encryption method; however, an assumption behind the PIR is that users are already aware of points of interest. Therefore, the PIR approach is not suitable for location-based alert systems.  

{\bf Searchable Encryption}
The main motivation behind searchable encryption techniques is outsourcing the data management to a third party, such as cloud providers without the third party learning about data or queried information by users. The use of a searchable encryption was initially proposed in~\cite{song2000} for a secure cryptographic search of keywords. The approach supports comparison queries~\cite{boneh2006fully} as well as subset queries and conjunctions of equality~\cite{boneh2007conjunctive}. The concept of HVE used in this paper was first proposed in~\cite{boneh2007conjunctive} and later extended in~\cite{blundo2009private}. The authors in~\cite{ghinita2014efficient} proposed the use of HVE to guarantee user privacy in location-based alert systems. Despite promising results of the approach, a major challenge is reducing the computation complexity of HVE at the server where the matching process is conducted. The work in~\cite{shaham2020enhancing} represents the current state-of-the art in location-based alerts with searchable encryption, and it takes into account probabilities of cells being part of an alert zone. A graph embedding technique is used to assign codes to cells in a manner that is aware of their likelihood of becoming alerted. The approach achieves significant improvement in performance compared to \cite{ghinita2014efficient}. However, as our experimental evaluation shows, such improvements are reached only when a relatively large number of alert cells are part of an alert zone. For alert zones with few cells, our approach clearly outperforms that of \cite{shaham2020enhancing}.

\section{Conclusions and Future Work}
We proposed a technique for secure location-based alerts that uses searchable encryption in conjunction with variable-length location encoding. Specifically, using Huffman compression codes, we showed that it is possible to significantly reduce the overhead of searchable encryption for cases where alert zones are compact and sparse, which is the case we believe to be most likely in practice. Extensive analytical and empirical evaluation results prove that our proposed approach significantly outperforms existing fixed-length encoding techniques, with only a small overhead in terms of additional encryption time.

In some cases, our approach may be limited by the lack of a systematic way of obtaining the probability values for various data domain regions. While having accurate probabilities is a plus, we do not require high accuracy in the actual values. In fact, in our design it is often the relative ordering of the probabilities that matters, and not necessarily the exact values.
In practice, one can produce a relatively stable and representative ordering of types of features based on their popularity. Even without precise probability values, one can still obtain significant gains. 

In future work, we plan to investigate more advanced stochastic models that capture correlations between cells in an alert zone, as well as cases when the alert zone evolution over time can be estimated by a spread model (e.g., a chemical gas leak). Significant performance gains can be achieved in such scenarios. One possibility is to model the space and time based on a Markov model. For a grid with $n$ cells, the model would consist of $2^n$ states, each representing a unique subset of grid cells. Next, one can determine a stationary distribution of probabilities over cells, and derive the values required to reach equilibrium. %Dependencies in space can be captured in the transition matrix.%, ensuring their inclusion in the final probabilities. Using the system's stationary distribution ensures that this time is large and the probabilities remain accurate for a sufficient amount of time before the next updates are conducted. 

Finally, while our work focuses on location data, our design can be extended to benefit other types of data as well. Our assumed semantics for ciphertext processing is that of range queries, and numerous other data types can benefit from secure range queries. However, one has to devise specific encodings and optimizations for each type of data, as straightforward application of HVE to generic data types may lead to high performance overhead, as illustrated in our earlier work~\cite{ghinita2014efficient}.

%\begin{acks}
%This research has been funded in part by NSF grants IIS-1910950 and  IIS-1909806, the USC Integrated Media Systems Center (IMSC), and unrestricted cash gifts from Microsoft and Google . Any opinions, findings, and conclusions or recommendations expressed in this material are those of the author(s) and do not necessarily reflect the views of any of the sponsors such as the National Science Foundation.
%Note: Cyrus Shahabi receives consulting income from Google for a different unrelated project. 
%\end{acks}

{\bf Acknowledgment.} This research has been funded in part
by NSF grants IIS-1910950, IIS-1909806 and CNS-2027794,
the USC Integrated Media Systems Center (IMSC), and unrestricted
cash gifts from Google and Microsoft. Any opinions,
findings, and conclusions or recommendations expressed in
this material are those of the author(s) and do not necessarily
reflect the views of any of the sponsors such as the NSF.

%%
%% The next two lines define the bibliography style to be used, and
%% the bibliography file.
\bibliographystyle{abbrv}
\bibliography{sample-base}
 
%\appendix

%%
%% If your work has an appendix, this is the place to put it.
%% Please note that all the content must fit within the page limits, including any appendices.
%\appendix
%
%\section{Research Methods}
% ...

\end{document}